\newcommand{\ep}{\varepsilon}
\newcommand{\cc}{\mathbb{C}}
\newcommand{\rr}{\mathbb{R}}
\let\Im\relax
\DeclareMathOperator{\Im}{Im}
\DeclareMathOperator{\sgn}{sgn}
\DeclareMathOperator{\Id}{Id}
\newtheorem{thm}{Theorem}[section]
\newtheorem{lma}{Lemma}[section]
\newtheorem{rmk}{Remark}[section]
\newcommand{\change}[1]{#1}
\newcommand{\typo}[1]{#1}%{\hl{#1}}
\newcommand{\dstypo}[1]{ #1}%{\colorbox{yellow}{$\displaystyle #1$}}
\newcommand*\wc{{}\cdot{}}
\newcommand{\pt}[2]{\frac{\partial #1}{\partial #2}}
\newcommand{\ut}{\frac{\partial u }{\partial t}}
\newcommand*{\defeq}{\mathrel{\rlap{%
                     \raisebox{0.3ex}{$\m@th\cdot$}}%
                     \raisebox{-0.3ex}{$\m@th\cdot$}}%
                     =}
\title{\bf Wavenumber selection via spatial parameter jump}
\author{ Arnd Scheel and Jasper Weinburd}
\date{ \small {\it University of Minnesota, School of Mathematics, 206 Church St. S.E., Minneapolis, MN 55455, USA} }
\begin{document}
\maketitle

\abstract{
The Swift-Hohenberg equation describes \change{an instability which forms finite-wavenumber patterns near onset}.
We study this equation posed with a spatial inhomogeneity; a jump-type parameter that renders the zero solution stable for $x<0$ and unstable for $x>0$. Using normal forms and spatial dynamics, we prove the existence of a family of steady-state solutions that represent a transition in space from a homogeneous state to a striped pattern state. The wavenumbers of these stripes are contained in a narrow band whose width grows linearly with the size of the jump. This represents a severe restriction from the usual constant-parameter case, where the allowed band grows with the square root of the parameter. We corroborate our predictions using numerical continuation and illustrate implications on stability of growing patterns in direct simulations. 
}

\section{Introduction}
In the classical thermal convection experiments of B\'enard, a \typo{shallow} plate of fluid is heated from below. For temperatures above a critical value, the fluid's diffusively heated state becomes unstable as heated fluid rises quickly in localized areas and cooler fluid falls nearby. \change{The rising and falling fluid of B\'enard's experiments created hexagonal convection cells, while subsequent convection experiments with similar settings produced squares and stripes (also called convection rolls). Each of these patterns occur with spatial periods in some range of a characteristic value \cite{ahlersinBook}}. Now suppose that we modify the experiment by heating only the right half-plate of fluid above the critical temperature. In this case, we may expect the fluid on the left to remain \typo{homogeneous} and the fluid on the right to form patterns. The analysis presented here confirms this intuition and \typo{additionally} determines that the range of spatial periods is significantly restricted from that occurring in the case where the full plate is heated. This selection of certain periods occurs in the full right-half plate, even far from the location of the temperature change.

\change{The pattern-forming convection experiments mentioned above are typically referred to as Rayleigh-B\'enard convection (RBC) and are a primary example of a \emph{finite-wavenumber instability}. Similar pattern-forming phenomena are widely observable throughout areas including biology \cite{kochMeinhardt}, optics and lasers \cite{longhi, legaNewell}, and chemical reaction-diffusion systems \cite{turing, castetsKepper, epstein}. %, and (as mentioned above) hydrothermal convection \cite{ahlersinBook, busse}. 
In Turing's seminal paper on morphogenesis, he exhibits a finite-wavenumber instability using a simple two-species, reaction-diffusion model equation \cite{turing}. Later, in work which he never published, Turing wrote down a single-variable equation exhibiting his instability \cite{dawes}. Independently, and a quarter century later, Swift and Hohenberg used a very similar single-variable equation in their mathematical study of RBC \cite{swiftHohenberg}. Since then, the Swift-Hohenberg (SH) equation} 
\begin{align}\label{eqn:SH}
\ut = -\left(1 + \Delta%\frac{\partial^2}{\partial x^2}
\right)^2 u +\mu u - u^3, \quad \quad u\in\rr, \quad \quad x\in \rr \text{ or } (x,y)\in\rr^2
\end{align}
\change{has been used as a model equation for a finite-wavenumber instability near onset. Today it is often used beyond its original convection context, for instance in modeling Turing patterns in embryonic development \cite{megason}, and appears extensively throughout the literature of pattern formation. It is known to possess stationary, even, spatially periodic solutions with wavenumber $k $ satisfying $|k^2-1|<\sqrt{\mu}$ \cite{colletEckmann, peletierTroy}. }

We consider a situation where the parameter $\mu$ varies over the spatial domain so that the homogeneous state $u\equiv 0$ is stable for $x<0$ and unstable for $x>0,$ 
\begin{align} \label{eqn:SHjump}
\ut = -\left(1 + \frac{\partial^2}{\partial x^2}\right)^2 u +m(x) u - u^3, \quad \quad u,x\in \rr, \quad \quad m(x)=\begin{cases}\mu, & x>0\\ -\mu, & x<0\end{cases}.
\end{align} %$m(x) = \begin{cases} -\mu, & x<0\\ \mu, & x>0\end{cases}$ . 
For small $\mu>0$, we prove the existence of half-patterned stationary solutions to \eqref{eqn:SHjump} and show that their wavenumbers occur in an interval significantly narrower than the wavenumbers of solutions to \eqref{eqn:SH}; see Figures \ref{fig:boundsonk} and \ref{fig:samplesolns}.

%\begin{figure}[h]
%    	\centering
%	\captionsetup{width=.45\linewidth}
%	\includegraphics[width=.5\linewidth]{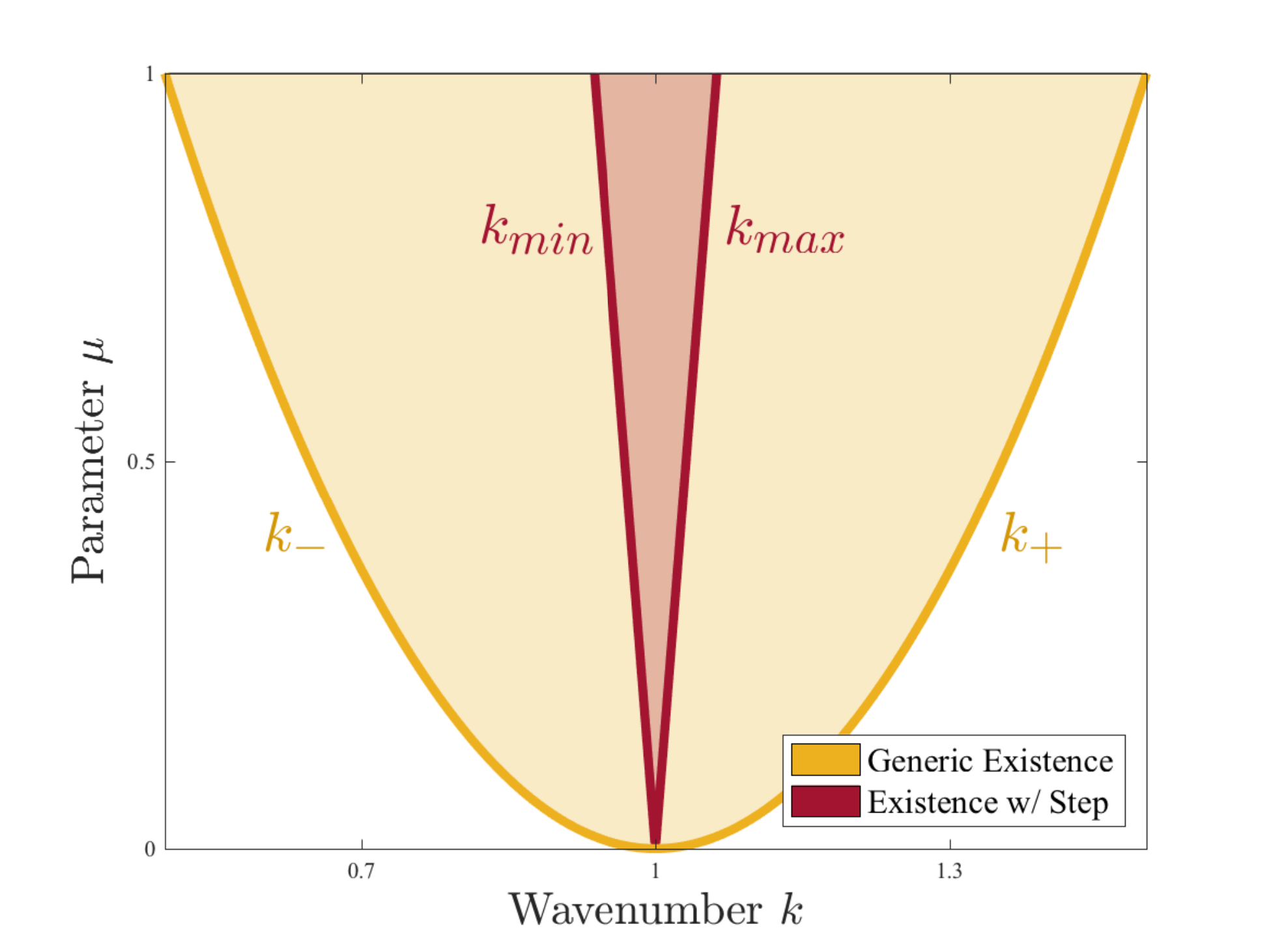}
%	\caption{Existence region of stripe solutions to \eqref{eqn:SH} (gold) and half-stripe solutions to \eqref{eqn:SHjump} (red).}
%	       	\label{fig:kselectn} 
%\end{figure}

\begin{figure}[h]
 \centering\includegraphics[scale=0.4]{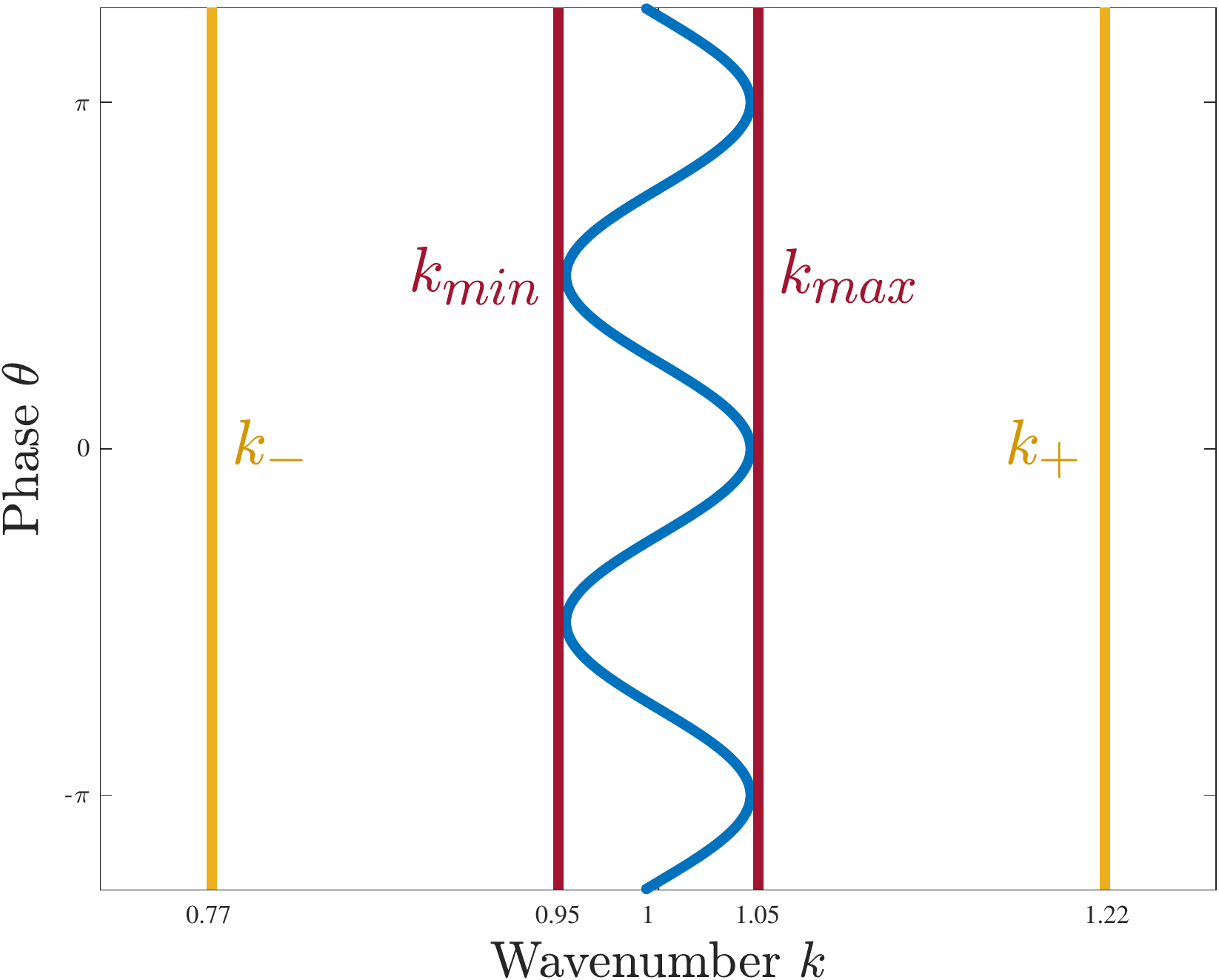}\quad
 \raisebox{-0.045in}{\includegraphics[scale=0.4]{kSelection.pdf}}
 \caption{Bounds on wavenumbers $k$ of solutions to the Swift-Hohenberg equation with constant parameter ($k_\pm$, gold) and with jump-type parameter ($k_{\text{min/max}}$, red). Left: With $\mu = 0.8$, strain-displacement relation (blue) and bounds on wavenumber (gold, red). Right: Existence region of stripe solutions to \eqref{eqn:SH} (gold) and half-stripe solutions to \eqref{eqn:SHjump} (red).}
%    \begin{subfigure}[t]{0.45\textwidth}
%        \centering
%        \includegraphics[width = \textwidth]{muPred8.pdf}
%        \caption{With $\mu = 0.8$, strain-displacement relation (blue) and bounds on wavenumber (gold, red). }\label{fig:boundsonk}
%    \end{subfigure}
%    \centering
%        \begin{subfigure}[t]{0.45\textwidth}
%        \centering
%        \includegraphics[width = \textwidth]{kSelection.pdf}
%        \caption{Existence region of stripe solutions to \eqref{eqn:SH} (gold) and half-stripe solutions to \eqref{eqn:SHjump} (red).} \label{fig:ExRegions}
%    \end{subfigure}
%    \caption{Predicted bounds on wavenumbers $k$ of solutions the Swift-Hohenberg equation without ($k_\pm$, gold) and with ($k_{\text{min/max}}$, red) a step-type parameter.}
   \label{fig:boundsonk}
\end{figure}

\change{Pattern formation} in inhomogeneous environments has been investigated experimentally. In the thermal convection setting, experiments were performed as early as the 1980s \cite{walton}. More recent experiments include the effect of applying a local electric field to a nematic liquid crystal, producing zigzag patterns \cite{Andrade-Silva}. Harkening back to Turing's original paper, biologists have used heterogeneous parameters in modeling the production of morphogens in embryonic development \cite{umulis, kaper, megason}. Additionally, there is interest in exploiting a parameter jump for circuit construction for storage and processing in classical and quantum settings \cite{pegrum, goldobin}.

\change{Furthermore, jump-type inhomogeneities in the context of pattern formation have been investigated in the mathematical literature, for instance \cite{benson, derks, jonesNLS, doelman, gohScheel, gohBeekie, jaramillo, scheelMorrissey}. In Turing's reaction-diffusion model, the idea was formally investigated using asymptotics in \cite{benson}. 
%and more recently in \cite{Michal Kozac} {\color{red} stuff from Michal Kozac and his advisor?} 
The analysis in \cite{doelman} for the nonlinear wave and Schrödinger equations is global, but relies on rather explicit knowledge of the phase portraits in spatial dynamics. In contrast, the analysis of reaction-diffusion spikes in \cite{doelman} is global but perturbative in nature.
%The nonlinear wave equation with an inhomogeneity is examined in \cite{derks}, while \cite{marangell, jonesNLS, jonesNLS2} analyze stability of patterns in an inhomogeneous nonlinear Schr\"odinger equation. In \cite{doelman}, Doelman et al. address a general class of PDE containing a heteroclinic orbit where the inhomogeneity creates a defect, but does not break the orbit.
Previous work by Scheel, Goh, and others investigates existence of non-stationary striped wave-trains in the case of a moving parameter jump \cite{gohScheel} and also in slowly-growing domains \cite{gohBeekie}, which may be seen as an analogue of a slowly moving jump. Effects of localized impurities are studied in \cite{jaramillo}. In \cite{scheelMorrissey}, Morrissey and Scheel develop basic concepts useful for stating our main result, which we discuss in the next section.
Many of our techniques, such as overlapping phase portraits, were previously used in these scenarios. However, our use of normal form theory appears to be unique in the study of patterns in inhomogeneous environments.}

The most closely related work was produced in the 1980s, in the wake of renewed interest in RBC experiments. Groups centered around M. Cross and L. %orenz
Kramer, and Y. Pomeau explored situations with a parameter that varies through criticality on a slow spatial scale \cite{CrossKramer, pomeauZaleskiRamps}. The same groups also explored the semi-finite system with boundary conditions \cite{CDHS, pomeauZaleskiBC}, which bear more heuristic similarities to our parameter jump. For further discussion of \change{boundary conditions as analogous to} the present parameter inhomogeneity, see Section \ref{sect:discussion}.

\subsubsection*{Strain-Displacement Relations} \label{sect:SDRel}

Our main theorem (below) may be thought of as the computation of a strain-displacement relation. A strain-displacement relation describes the phase and wavenumber of a certain pattern that may occur in a given system. For a full description of strain-displacement relations see \cite{scheelMorrissey}; we provide only a brief explanation here.

Consider the stationary Swift-Hohenberg equation
\[
0 = -\left(1 + \frac{\partial^2}{\partial x^2}\right)^2 u +m(x) u - u^3.
\]
For $m(x)\equiv \mu>0$, there exists a family of stationary, even, spatially periodic solutions, that we shall refer to as {\it stripes} \cite{colletEckmann, peletierTroy}. Parameterizing the family of stripes by phase $\theta$ and wavenumber $k$, we write $\mathcal S = \{ u_*(k x - \theta; k) \mid \theta\in [0,2\pi), k\in J_k\}$ for some open interval $J_k$. Writing the equation as a system of first-order equations in $\rr^4$, $\mathcal{S}$ \typo{corresponds to} a family of periodic orbits, forming a smooth annulus. It turns out that part of this annulus is normally hyperbolic, that is, each periodic orbit possesses 2-dimensional stable and unstable manifolds. For $m(x)\equiv -\mu<0$, the origin is hyperbolic, with two-dimensional stable and unstable manifolds. One obtains solutions for the case of our jump function $m(x)=\mu \sgn(x)$ by matching solutions to $m(x)=\mu$ and $m(x)=-\mu$ at $x=0$, overlaying the ``phase portraits'' from the two cases. 
The solutions of interest to us are solutions that converge to zero in $x<0$ and to a periodic solution in $x>0$. Those solutions are found in the intersection of the 2-dimensional unstable manifold of the origin from the  $m(x)\equiv -\mu$ flow, and the 3-dimensional stable manifold of the family of periodic orbits from the $m(x)\equiv\mu$ flow. Adding dimensions of manifolds and subtracting the dimension of ambient space, $2+3-4=1$, we expect a one-dimensional curve of intersections. Points on the curve can be identified with the asymptotic periodic orbit's wavenumber $k$ and its asymptotic phase $\theta$. We refer to the relation between $k$ and $\theta$ on this curve as the \emph{strain-displacement relation}, alluding to the intuitive stretching and compression, due to variations in $k$, and the displacement or shift of the asymptotic pattern, due to changes in $\theta$. 

In \cite{scheelMorrissey}, strain-displacement relations refer to the situation of the SH equation posed on $x>0$, with boundary conditions at $x=0$. Those boundary conditions can be viewed as a two-dimensional manifold in the associated 4-dimensional ODE, equivalent to the unstable manifold of the origin in the $m(x)\equiv -\mu$ system. From this view point, the parameter jump quite literally represents an effective boundary condition at $x=0$ (see Section \ref{sect:discussion}). 

Figure \ref{fig:boundsonk} shows a strain-displacement relation (blue) in the $(k,\theta)$ plane. Figure \ref{fig:samplesolns} shows two sample solutions with phase-dependent wavenumbers. The top sample solution corresponds to the point $(k,\theta) = (1.05,0)$ on the strain-displacement relation of Figure \ref{fig:boundsonk} and the bottom solution corresponds to the point $(k,\theta) = (0.95,\tfrac{\pi}{2})$. The difference in asymptotic phases may be observed at $x=0$ and the difference in wavenumbers may be observed by counting minima.

\begin{figure}[h]
\centering
\includegraphics[width=0.38\textwidth]{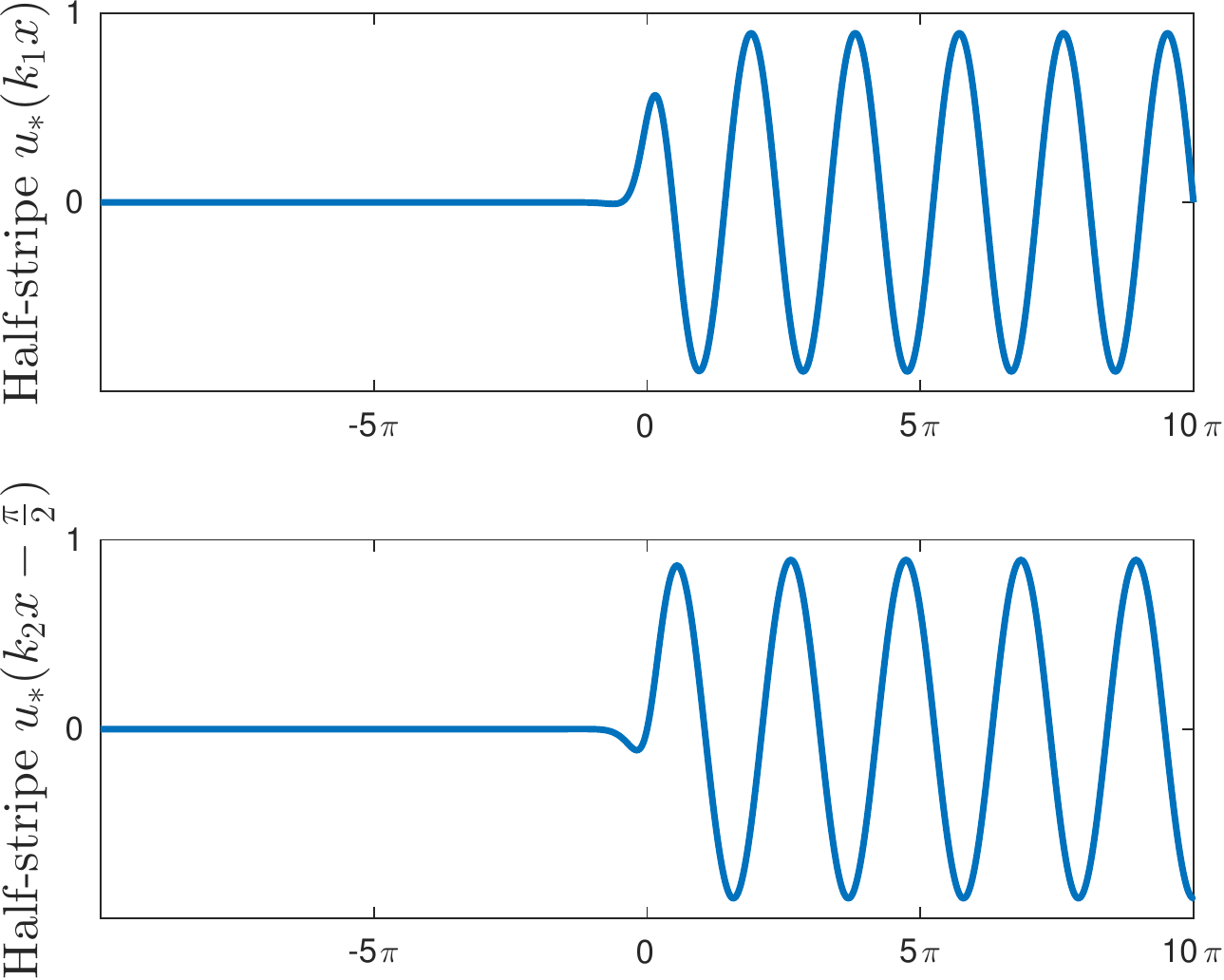}
\qquad \qquad
\raisebox{0.42in}{\includegraphics[width=0.4\textwidth]{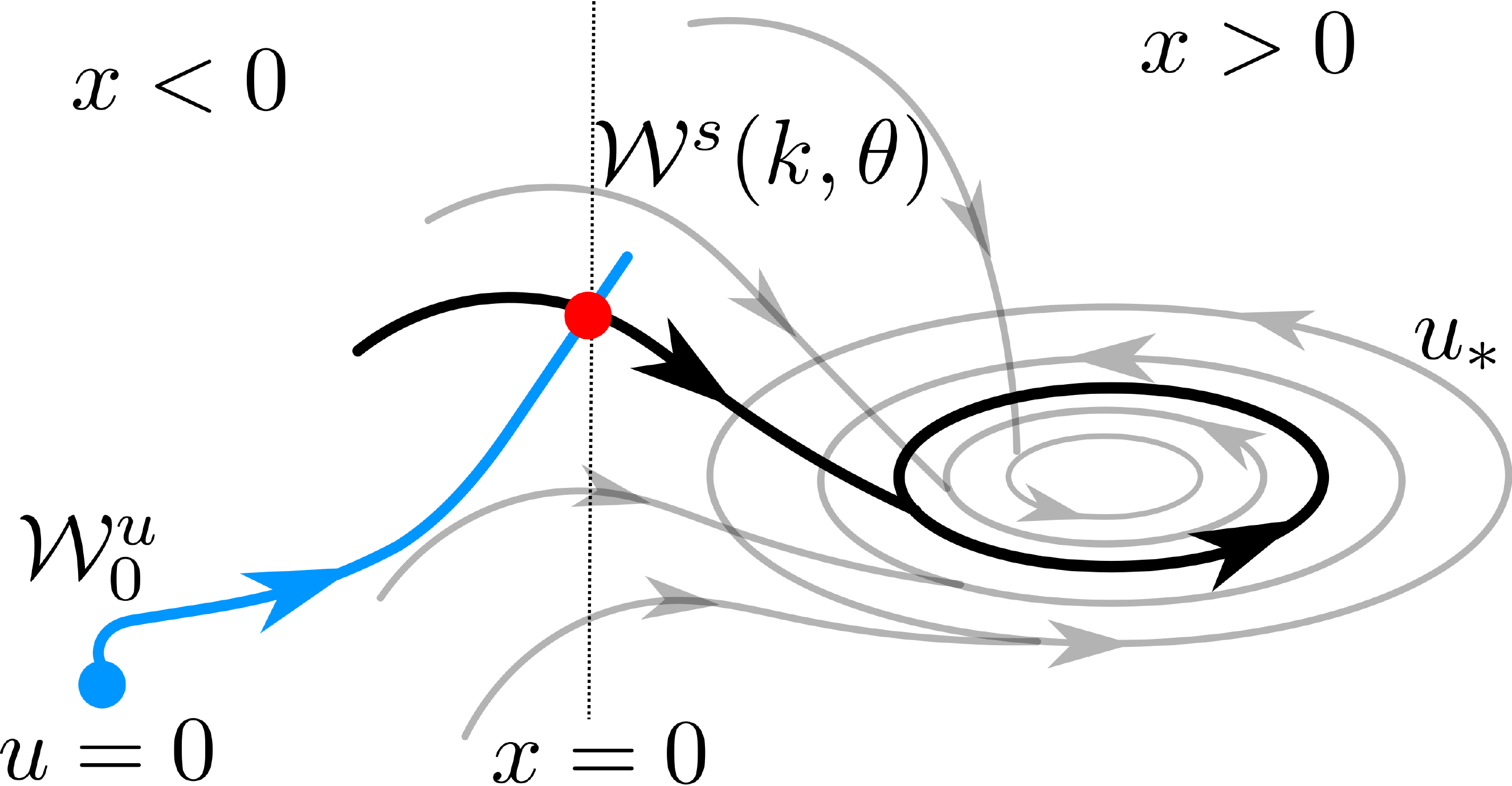}}
\caption{Left: Half-stripe solutions with asymptotic phase shifts $0$ (top) and $\tfrac{\pi}{2}$ (bottom). The dependence of wavenumber on phase is illustrated by \typo{the difference in} number of minima. Right: Schematic showing spatial dynamics with a heteroclinic from $u\equiv 0$ to a periodic solution.}\label{fig:samplesolns}
%        \begin{subfigure}[t]{0.45\textwidth}
%        \centering
%        \includegraphics[width = \textwidth]{samplesoln.pdf}
%        \caption{Half-stripe solutions with differing asymptotic phase shifts $0$ (top) and $\tfrac{\pi}{2}$ (bottom).
%        %The dependence of wavenumber on phase is illustrated by the differing number of maxima.
%        } 
%    \end{subfigure}
%    ~
%    \begin{subfigure}[t]{0.45\textwidth}
%        \centering
%        \includegraphics[width = \textwidth]{heteroclinic.png}
%        \caption{Schematic showing spatial dynamics; a\newline heteroclinic from $u\equiv 0$ to a periodic solution.}\label{fig:samplesolns}
%    \end{subfigure}
%    \caption{Make this into two figures, side by side?.}
%    \label{fig:samplesolns}
\end{figure}

\subsubsection*{Main Result and Outline}

Our main result is the existence of stationary \emph{half-stripe} solutions to the SH equation with a parameter jump, equation \eqref{eqn:SHjump}.
These half-stripe solutions tend to $0$ as $x \to -\infty$ and converge to a spatially periodic solution as $x\to \infty$. Furthermore, we determine the leading order terms in the expansion for the strain-displacement relation. We find that the system selects wavenumbers in a band growing linearly in the size of the parameter jump $\mu$.

\begin{thm} \label{thm:mainthm}
For sufficiently small $\mu>0$, there exists a one-parameter family of bounded, stationary solutions of the Swift-Hohenberg equation \eqref{eqn:SHjump}. The family $u(x;\theta) = u(x;\theta + 2\pi)$ is periodic in the parameter $\theta$. For each solution $u(x;\theta)$ in the family, the following hold:
\begin{enumerate}
\item $u(x;\theta) \to 0$ as $x\to -\infty$, and
\item $|u(x;\theta) - u_*(kx -\theta;k )| \to 0$ as $x\to \infty$,
\end{enumerate}
for some $k=k(\theta;\mu)$, where $u_*(kx ;k )$ is an even periodic solution to \eqref{eqn:SH} with maximum at $x=0$ and wavenumber $k$ depending on the parameter $\mu$.
Furthermore, at leading order, 
$$k 
(\theta; \mu) = 1 + \frac{\mu}{16}\cos 2\theta + \mathcal O (\mu^{3/2})
$$
with $\theta$-uniform higher-order corrections.
\end{thm}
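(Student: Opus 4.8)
The plan is to recast the stationary equation as a reversible first-order system in $\rr^4$ and to analyze its spatial dynamics separately on $x<0$ and $x>0$, gluing the two flows at the jump. With phase-space variable $(u,u',u'',u''')$, the stationary version of \eqref{eqn:SHjump} reads $u'''' = (m(x)-1)u - 2u'' - u^3$, which is autonomous on each half-line and reversible under $R:(u,u',u'',u''') \mapsto (u,-u',u'',-u''')$. Linearizing at the origin gives spatial eigenvalues solving $(\nu^2+1)^2 = m$: for $m=+\mu$ these are four purely imaginary values clustered at $\pm i$, a reversible $1{:}1$ resonance at $\mu=0$, whereas for $m=-\mu$ they leave the imaginary axis so the origin becomes hyperbolic with two-dimensional stable and unstable manifolds. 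First I would record these spectral facts and the associated invariant objects: for $m=+\mu$ the stripe solutions form a normally hyperbolic annulus of periodic orbits with a three-dimensional center-stable manifold $W^s_+$, while for $m=-\mu$ the origin carries a two-dimensional unstable manifold $W^u_-$.

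Next I would reduce each half-line flow to the normal form of the reversible $1{:}1$ resonance (Haragus--Iooss), introducing a complex amplitude via $u \approx \sqrt{\mu}\,(A\,e^{ix} + \bar A\,e^{-ix})$ on the slow scale $X=\sqrt\mu\,x$. At leading order the reduced equation is the real Ginzburg--Landau equation $4A'' + \sigma A - 3|A|^2 A = 0$ with $\sigma=\sgn(m)=\pm 1$, whose stripes $A=r\,e^{i(qX+\psi)}$ satisfy $r^2=(1-4q^2)/3$ and correspond to wavenumber $k = 1 + \sqrt{\mu}\,q + \dots$ and asymptotic phase $\theta=-\psi$. This leading equation carries an artificial phase-rotation symmetry $A\mapsto e^{i\xi}A$ with conserved ``angular momentum'' $L=\Im(\bar A A')$ and energy; on $W^u_-$ both equal their values at the origin and hence vanish identically, so continuity of $(A,A')$ at $X=0$ forces $L=r^2 q=0$ on the matched stripe, i.e. $q=0$ and $k=1$ at leading order, with $\theta$ undetermined. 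The genuine content of the theorem therefore lives one order higher, and here I would invoke the only exact symmetries of the inhomogeneous problem, $u\mapsto -u$ (acting as $A\mapsto -A$, hence $\theta\mapsto\theta+\pi$) together with the reversibility $R$; these force $k(\theta;\mu)$ to be $\pi$-periodic and even in $\theta$, so that $k = 1 + c\,\mu\cos 2\theta + \mathcal O(\mu^{3/2})$ for a constant $c$ to be computed.

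The core of the argument is the matching, carried out as a transverse intersection $W^u_-\cap W^s_+$ at $x=0$. I would parametrize $W^u_-(0)$ by its two unstable eigendirections and $W^s_+(0)$ by the stripe parameters $(q,\psi)$ together with the stable normal direction supplied by the higher-order normal hyperbolicity of the annulus, and impose equality of the four physical components $(u,u',u'',u''')$ at $x=0$; the count $2+3-4=1$ predicts a one-parameter solution curve, produced by the implicit function theorem once the leading degeneracy $q\equiv 0$ is resolved. The phase-dependence enters because the full-state matching at the fixed jump must include the slaved higher harmonics: the fundamental $\sqrt\mu\,A\,e^{ix}$ beats against the non-resonant third harmonic $\sim \mu^{3/2}A^3 e^{3ix}/64$ generated by $u^3$, and since $A^3 e^{3ix}$ carries phase weight three against the fundamental's weight one, their interference at $x=0$ depends on $A^2\sim e^{-2i\theta}$. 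This breaks the artificial rotation invariance of the leading equation and supplies the leading phase-dependent source in the balance $r^2 q=(\text{source})$; solving yields $q=\tfrac{\sqrt\mu}{16}\cos 2\theta+\dots$ and hence $k = 1 + \tfrac{\mu}{16}\cos 2\theta + \mathcal O(\mu^{3/2})$, the coefficient $1/16$ tracing back to the $1/64$ of the third-harmonic slaving.

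The main obstacle is precisely this last step: the quantity selecting the wavenumber vanishes at leading order, so the entire strain--displacement relation is carried by higher-order corrections. Making this rigorous requires (i) computing the reversible normal form to high enough order to isolate the leading rotation-breaking term and to confirm it produces the $\cos 2\theta$ dependence with coefficient $1/16$; (ii) controlling the center-stable manifold $W^s_+$ of the stripe annulus, including the normal hyperbolicity absent from the conservative leading-order equation, uniformly in the stripe parameters; and (iii) proving the intersection is transverse and that the neglected remainder is subdominant and $\theta$-uniform, so it does not disturb the leading coefficient. Establishing transversality together with the $\theta$-uniform control of the higher-order terms is the crux of the proof.
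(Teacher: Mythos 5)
Your overall architecture coincides with the paper's: spatial dynamics in $\rr^4$, the reversible $1{:}1$ resonance and its Haragus--Iooss normal form, reduction to the real Ginzburg--Landau equation on each half-line, the observation that the conserved quantity $\Im(\bar A A')$ forces $q=0$ at leading order so that the strain--displacement relation lives one order higher, and a transverse intersection $W^u_-\cap W^s_+$ continued by the implicit function theorem. The symmetry argument pinning the functional form to $1+c\,\mu\cos 2\theta$ is a legitimate shortcut. The gap is in the identification of the leading symmetry-breaking source, i.e.\ in the computation of $c$, which is the actual crux of the paper.

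You attribute the phase dependence to the third harmonic $\sim A^3 e^{3ix}/64$ slaved by $u^3$ beating against the fundamental at $x=0$. This cannot be the source. The matching at $x=0$ equates the physical variables $(u,u',u'',u''')$, which on each half-line are reconstructed from the amplitude via the normal form transformation $\Theta\circ(\Id+\Phi_{3,0})\circ(\Id+\Phi_{1,1}(\pm\mu))$. The cubic slaving $\Phi_{3,0}$ --- which is where the third-harmonic content and the factor $1/64=(1-3^2)^{-2}\cdot(\dots)$ reside --- is $\mu$-independent to the relevant order and hence \emph{identical} on both sides of the jump; it cancels exactly from the matching condition and contributes nothing to the mismatch, at any order. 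What survives is the difference of the $\mu$-linear parts of the transformation, $[\Id+\Phi_{1,1}(\mu)]^{-1}[\Id+\Phi_{1,1}(-\mu)]=\Id+\mathcal O(\mu)$, whose off-diagonal entries couple $A$ to $\bar A,\bar B$; in the co-rotating frame these become counter-rotating terms $\propto e^{\mp 2ix}$, and it is these (generated by the linear term $m(x)u$, not by $u^3$) that inject the $e^{-2i\phi}$ dependence into the glued unstable manifold and hence into $\Im(\bar a b)$. The paper must therefore compute $\Phi_{1,1}$ explicitly (its Lemmas \ref{lma:nftransfn} and \ref{lma:movingtrans}); the coefficient $1/16$ arises from those matrix entries together with the intersection radius $r=1/2$ and the rescaling $k=1+\tfrac{\ep}{2}\kappa$, not from the $1/64$ third-harmonic slaving. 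Following your plan as written, the computed phase-dependent source would vanish and you would not recover the stated coefficient; the missing ingredient is the explicit $\mu$-dependent (linear-in-amplitude) part of the normal form transformation and its jump across $x=0$.
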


\change{This result establishes the shape of a strain-displacement relation to leading order and at small amplitude. We are not aware of other cases where such a characterization has been carried out, except for the cases studied in \cite{scheelMorrissey} where strain-displacement relations could be computed explicitly in the Ginzburg-Landau equation. Additionally, our result provides a rigorously established example where results on wavenumber selection in growing domains, from \cite{gohBeekie}, are applicable. In this way, our result broadens the spectrum covered by the theories in \cite{scheelMorrissey,gohBeekie}, adding a novel and universal example.}

We give our proof in Section \ref{sect:proof}. The proof relies on normal form theory, allowing us to use the real Ginzburg-Landau equation as amplitude equation for the SH equation. We explicitly compute the $\mu$-dependent part of the normal form transformation in order to bridge the parameter jump. Having overcome this difficulty, we construct a family of heteroclinic connections in the Ginzburg-Landau equation which correspond to the half-stripe solutions.

In Section \ref{sect:application} we present numerical corroboration and an application via direct simulation. We use numerical continuation to compute strain-displacement relations which have good agreement with Theorem \ref{thm:mainthm}. We also apply a two-dimensional analogue of Theorem \ref{thm:mainthm} in order to select a zigzag pattern. To do so, we discuss the two-dimensional instabilities of stripes in the SH equation posed on the plane, explain how slowly moving the jump allows us to select a unique wavenumber from the restricted band, and briefly describe the resulting zigzag patterns.

In Section \ref{sect:discussion} we discuss some implications and extensions of our result including the use of actual boundary conditions, smooth parameter inhomogeneities (including very slow ramps), stability, and extensions to $(x,y)\in\rr^2$.
\paragraph{Acknowledgments.} The authors gratefully acknowledge supprt through NSF DMS-1612441.

\section{Proof: Existence of Half-Stripes and Wavenumber Selection} \label{sect:proof}
We begin with some preliminary lemmas, which are all stated for the constant parameter problem \eqref{eqn:SH}. In Section \ref{sect:diff} we justify how we apply these results for $x<0$ and $x>0$ independently, then overlay phase portraits.

\subsection{Spatial Dynamics}
We begin by writing the steady-state equation with a constant parameter $\mu$ as a first order equation in $\rr^4$. So
\[
0  =  u + 2u_{xx} +u_{xxxx} - \mu u + u^3
\]
becomes
\begin{align} \label{eqn:U}
 \frac{dU}{dx}  =   L U + R(U;\mu),
\end{align} 
where
\[
U = \begin{bmatrix}
  u \\
  u_x \\
  u_{xx}+u\\
  u_{xxx}+u_x
 \end{bmatrix} \in \rr^4, \quad
L  = \begin{bmatrix}
  0 & 1 & 0 & 0 \\
  -1 & 0 & 1 & 0 \\
  0 & 0 & 0 & 1 \\
  0 & 0 & -1 & 0
 \end{bmatrix}, \quad 
 R(U,\mu) = \begin{bmatrix}
  0 \\
  0 \\
  0 \\
  \mu U_1
 \end{bmatrix} + \begin{bmatrix}
  0 \\
  0 \\
  0 \\
  -U_1^3
  \end{bmatrix}.
 \]
  
In this form, our equation undergoes a ``reversible-Hopf bifurcation" or a ``reversible $1:1$ resonance" as described in \cite[\S4.3.3]{ioossHaragus} at $\mu=0$. The symmetry $u(x) = u(-x)$ of the SH equation has been replaced by the reversibility symmetry $\mathbf S$ defined by $\mathbf S U = (U_1,-U_2,U_3, -U_4)^\top$, 
% \begin{align} \label{eqn:symmetry}
% \mathbf{S} = \begin{bmatrix}
% 1 & 0 & 0 & 0\\
% 0 & -1 & 0 & 0\\
% 0 & 0 & 1 & 0\\
% 0 & 0 & 0 & -1
% \end{bmatrix}
% \end{align}
 which anti-commutes with the vector field on the right-hand side of Equation \eqref{eqn:U}. This reversibility plays an important role in the computation of the normal form equation and transformations below.
  
\subsection{Normal Forms} \label{subsect:NF}
In this section, we present a normal form equation which represents the dynamics of the ODE in the last section for sufficiently small $\mu$ values. Our normal form equation is the same as that found in \cite{ioossHaragus}. However, we also compute the leading $\mu$-dependent part of the transformation used to arrive at the normal form equation; it is essential for our future steps.

We define the space $\widetilde{\cc^2} \defeq \cc^4/\langle(A,B,C,D)-(\overline C,\overline D,\overline A, \overline B)\rangle = \{(A,B,\overline A, \overline B)\mid A,B\in \cc\}$. Clearly $\widetilde{\cc^2}\cong \cc^2$, and so we drop the $\widetilde{\phantom{nn}}$ to simplify notation. Note that the matrix
$$ \Theta = \begin{bmatrix} 
1 & 0 & 1 & 0 \\
i & 1 & -i & 1\\
0 & 2i & 0 & -2i \\
0 & -2 & 0 & -2 \\
\end{bmatrix} \quad \quad \text{maps}\quad\quad \cc^2 \to \rr^4.$$
The change of variables $U =\Theta (A,B)$ puts the linear part $L$ into Jordan normal form and yields an equation of the form
\begin{align} \label{eqn:jordanform}
\begin{bmatrix} A_x\\ B_x \end{bmatrix} & = \begin{bmatrix} i & 1 \\0 & i \end{bmatrix} \begin{bmatrix} A\\ B \end{bmatrix} + \widetilde R(A,B;\mu)
\end{align}
and the complex conjugate equations.

%{\color{red} To state the normal form result, we define the degree $q$ of a monomial as the sum of the powers of $A,B,$ and $\mu$. The degree $N$ of a polynomial is the maximum of the degrees of its monomial terms. We use the notation $\Phi_{p,\ell}$ for a homogeneous polynomial of degree $N = p + \ell$ where $\ell$ is the power of $\mu$, and $p$ is the sum of the powers of $A$ and $B$.}

\begin{lma}[Normal Form Equation] \cite[Lem. 3.17]{ioossHaragus} \label{lma:normalform}
For any positive integer $N\geq 1$, there exist neighborhoods $\mathcal{U},\mathcal{V}$ of $0$ in $\cc^2$ and $\rr$ respectively so that for any $\mu\in \mathcal{V}$ there exists a polynomial $\Phi(\wc;\mu)\colon\cc^2 \to \cc^2 $ of degree $N$ with the following properties:
\begin{enumerate}
\item The coefficients of the monomials of degree $q$ in $\Phi(\wc;\mu)$ are functions of $\mu$ of class $\mathcal{C}^{N-q}$,
$$ \Phi(0,0;0) = 0, \quad \quad \partial_{(A,B,\overline A,\overline B)} \Phi(0,0;0) =  0, \quad \text{and} \quad \mathbf{S}\Phi(A,B;\mu) = \Phi(\overline A,-\overline B;\mu). $$
\item For $(A,B)\in \mathcal{U}$, the change of variables
\begin{align}\label{eqn:nftransfn}
(A,B) \mapsto  \Id + \Phi(A,B;\mu)
\end{align}
gives a transformation $\cc^2 \leftrightarrow \cc^2$ which transforms equation \eqref{eqn:jordanform} into
\begin{align} \label{eqn:thenormalform}
\begin{bmatrix} A_x\\ B_x \end{bmatrix} & = \begin{bmatrix} i & 1 \\0 & i \end{bmatrix} \begin{bmatrix} A\\ B \end{bmatrix} + \begin{bmatrix} iP(A,B;\mu) & 0 \\Q(A,B;\mu) & iP(A,B;\mu)\end{bmatrix} \begin{bmatrix} A\\ B \end{bmatrix} + G(A,B;\mu)
\end{align}
where the remainder $G$ is smooth and $G(A,B;\mu) = o((|A|+|B|)^N)$ and $P,Q$ are real valued polynomials of degree $N-1$ given by
\begin{align*}
P(|A|^2, (A\overline B-\overline A B);\mu) & = -\tfrac{1}{8}\mu + \tfrac{9}{16}|A|^2 + \mathcal{O}\left((|\mu|+(|A|+|B|)^2)^2\right)\\
Q(|A|^2, (A\overline B-\overline A B);\mu)& = -\tfrac{1}{4}\mu + \tfrac{3}{4} |A|^2 + \tfrac{3i}{16}(A\overline B-\overline A B) + \mathcal{O}\left((|\mu|+(|A|+|B|)^2)^2\right).
\end{align*}
\end{enumerate}
\end{lma}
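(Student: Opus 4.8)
The plan is to treat the existence of the transformation $\Id+\Phi(\wc;\mu)$ and the block structure of \eqref{eqn:thenormalform} as the content of the cited normal form theorem \cite{ioossHaragus}, and to spend the real effort on extracting the leading coefficients of $P$ and $Q$. First I would treat $\mu$ as a parameter (equivalently, append $\mu_x=0$) so the computation takes place at the fixed degenerate linear part $L_0=\left(\begin{smallmatrix} i & 1\\ 0 & i\end{smallmatrix}\right)$ together with its complex-conjugate block. Normal form theory proceeds through the homological equation: writing the degree-$q$ part of the transformation as $\Phi_q$ and setting $(\mathrm{ad}_{L_0}\Phi_q)(X)\defeq D\Phi_q(X)\,L_0X-L_0\,\Phi_q(X)$, one solves $\mathrm{ad}_{L_0}\Phi_q=\Pi^{\mathrm c}\widetilde R_q$ to cancel the non-resonant monomials, leaving the resonant part $\Pi^{\mathrm{nf}}\widetilde R_q$ in the normal form. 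The resonant space is $\ker(\mathrm{ad}_{L_0^*})$, the monomials equivariant under the flow of the semisimple part of $L_0$; for the $1{:}1$ resonance these are generated by the two invariants $|A|^2$ and $A\overline B-\overline A B$, which forces precisely the shape $\left(\begin{smallmatrix} iP & 0\\ Q & iP\end{smallmatrix}\right)$. Imposing $\mathbf S\Phi(A,B;\mu)=\Phi(\overline A,-\overline B;\mu)$ at each order keeps the transformation reversible and makes $P,Q$ real, and the $\mathcal C^{N-q}$ smoothness of the degree-$q$ coefficients is the regularity guaranteed by the general theorem.

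The explicit coefficients are the new ingredient. I would substitute $U=\Theta(A,B)$ into \eqref{eqn:U}; since the first row of $\Theta$ gives $U_1=A+\overline A$ and the only nonlinearity sits in the fourth slot, the transformed remainder is $\widetilde R=\Theta^{-1}(0,0,0,\mu U_1-U_1^3)^\top$. Expanding $\mu(A+\overline A)$ and $-(A+\overline A)^3$ and distributing through $\Theta^{-1}$ produces the $\mathcal O(\mu)$ monomials and the cubic monomials $A^{j}\overline A^{\,3-j}$ that seed the computation. Because the nonlinearity is purely cubic there are no quadratic terms, so the first corrections appear exactly at the linear-in-$\mu$ and cubic-in-amplitude orders that carry the coefficients of interest, and with no quadratic transformation the cubic normal form coefficients come directly from the resonant projection of the cubic part of $\widetilde R$ with no lower-order feedback. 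Solving the homological equation at the linear-in-$\mu$ level yields the constants $-\tfrac18\mu$ and $-\tfrac14\mu$, and solving it at cubic order yields $\tfrac{9}{16}|A|^2$, $\tfrac34|A|^2$ and $\tfrac{3i}{16}(A\overline B-\overline A B)$; throughout I would retain the full $\mu$-dependent part of $\Phi$, not merely its value at $\mu=0$, since this is exactly what is needed later to match solutions across the jump at $x=0$.

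The main obstacle is the bookkeeping of this explicit computation rather than any conceptual difficulty. Inverting $\mathrm{ad}_{L_0}$ on the non-resonant complement is nontrivial because $L_0$ is not semisimple: the nilpotent Jordan part couples the $A$- and $B$-components, so the homological equation is triangular rather than diagonal and must be solved componentwise in the correct order, all while projecting off the resonant directions and respecting the reversibility constraint at each step. Computing $\Theta^{-1}$ and correctly tracking how the single scalar nonlinearity $\mu U_1-U_1^3$ spreads into every component of $\widetilde R$ is the most error-prone part. As a consistency check I would compare the resulting $P,Q$ against the known Ginzburg--Landau amplitude reduction of the Swift--Hohenberg equation, whose linear growth rate and cubic saturation coefficient fix the leading terms $-\tfrac14\mu$ and $\tfrac34|A|^2$ up to the normalization built into $\Theta$.
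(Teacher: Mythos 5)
Your proposal matches the paper's treatment: the paper likewise takes this lemma as a restatement of Lemma 3.17 of Haragus--Iooss, obtaining the coefficients of $P,Q$ from the homological-equation algorithm carried out in the worked example of that reference, which is exactly the computation you outline (including the observation that the purely cubic nonlinearity means no quadratic feedback into the cubic resonant terms). The only small caution is that for the non-semisimple block the resonant complement is characterized by commutation with $e^{L_0^{*}x}$, i.e.\ by $\ker(\mathrm{ad}_{L_0^{*}})$ for the full adjoint rather than equivariance under the semisimple part alone, though your conclusion about the invariants $|A|^2$ and $A\overline{B}-\overline{A}B$ and the resulting block structure is the correct one.
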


This lemma is a restatement of Lemma 3.17 \cite[\S4.3.3]{ioossHaragus} in the particular case of a double eigenvalue equal to $i$. In an example in the same section%\cite[Section 4.3.3]{ioossHaragus}page 223
, the authors compute the first three coefficients in each of the polynomials $P,Q$ as they appear above. To do this they execute part of an algorithmic computation which is derived from their proof of the normal form theorem. The same algorithm may be used to compute the normal form transformation itself $\Id + \Phi(A,B;\mu)$. However, the authors stop short of this detail. We must compute part of it explicitly for a later argument.

\begin{rmk}
In practice, we compute this transformation as a composition of functions, each accurate up to a certain order in $\mu$ and $|A|, |B|$. \change{We use the notation $\Phi_{p,q}$ for a polynomial with degree $p$ in $A$ and $B$, and degree $q$ in $\mu$. In our case, we first compute the cubic  (in $|A|,|B|$) polynomial $\Phi_{3,0}$ with $\mu = 0$.
Then we compute the lowest order $\mu$-dependent polynomial $\Phi_{1,1}$ (which is linear in both $\mu$ and $|A|,|B|$).} %Indeed, the transformation $\Id + \Phi_{3,0}$ has no effect on the $\mu$-dependent, $|A|,|B|$-linear terms of the vector field and vice versa for $\Phi_{1,1}$.
Now the transformation of the lemma is precisely the composition $\Id + \Phi(A,B; \mu) = (\Id + \Phi_{3,0})\circ( \Id + \Phi_{1,1})$.
For more detail on how the normal form transformations at different orders depend on each other, see \cite[\S3.2.3]{ioossHaragus}.
%We may apply Lemma \ref{lma:normalform} successively with $N=2, 3$ to obtain monomials $\Phi_{1,1},\Phi_{3,0}$ which affect disjoint terms in the vector field. Thus they commute and we may treat $\Phi_{1,1}$ as being the first in a composition of normal form transformations $\Theta\circ \Id + \Phi_{3,0}\circ \Id + \Phi_{1,1}$. This ordering is helpful in later arguments because $\Phi_{1,1}$ is the only coordinate transormation which depends on the sign of $\mu$. 
\end{rmk}

Using the algorithm from \cite{ioossHaragus} we compute the vector coefficients in the polynomial for $N=1$. The computations essentially amount to applications of the Fredholm alternative and solving systems of four linear equations. %"nontrivial ordering"

\begin{lma} [Normal Form Transformation] \label{lma:nftransfn}
Let $\mathcal{V}$ be the neighborhood guaranteed by Lemma \ref{lma:normalform}. For any $\mu\in \mathcal{V}$, a polynomial satisfying the conditions of Lemma \ref{lma:normalform} with $N=1$ is
\begin{align} \label{eqn:mutrans}
\Phi_{1,1}(A,B;\mu) & = \mu \begin{bmatrix} 
\frac{-3}{16} & \frac{-i}{8} & \frac{3}{16}  & \frac{-i}{8} \\[6pt]
\frac{i}{8}& \frac{-3}{16} & \frac{-i}{8} & \frac{-1}{16}  \\[6pt]
\frac{3}{16}  & \frac{i}{8} & \frac{-3}{16}  & \frac{i}{8} \\[6pt]
\frac{i}{8} & \frac{-1}{16} & \frac{-i}{8} & \frac{-3}{16} \\
\end{bmatrix} \begin{bmatrix} A\\[6pt] B\\[6pt] \overline A\\[6pt] \overline B\end{bmatrix}
\end{align}
\end{lma}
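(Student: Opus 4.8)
The plan is to run the normal-form algorithm of \cite{ioossHaragus} one step beyond Lemma \ref{lma:normalform}, restricting to the single order that is actually needed for the transformation: monomials of degree one in $(A,B,\overline A,\overline B)$ and degree one in $\mu$. Writing the unknown as $\Phi_{1,1}(A,B;\mu)=\mu\,M\,(A,B,\overline A,\overline B)^\top$ for a complex $4\times4$ matrix $M$, I would substitute $\Id+\Phi_{1,1}$ into \eqref{eqn:jordanform}, expand, and keep only the terms that are simultaneously linear in $\mu$ and linear in the variables. The quadratic-and-higher part of the full transformation and the normal-form remainder contribute only at higher order, so this collapses to a single \emph{homological equation}
\[
\mu\,(M\mathcal J-\mathcal J M)=\widetilde R_{1,1}-N_{1,1},
\]
where $\mathcal J$ is the linear operator of \eqref{eqn:jordanform} acting on $(A,B)$ together with its complex conjugate acting on $(\overline A,\overline B)$, the matrix $\widetilde R_{1,1}$ is the part of the transformed vector field $\widetilde R$ that is linear in $\mu$ and linear in the variables, and $N_{1,1}$ is the corresponding resonant part, already encoded in the linear-in-$\mu$ coefficients of $P$ and $Q$ in Lemma \ref{lma:normalform}.

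To assemble the right-hand side I would first compute $\widetilde R_{1,1}$. The only $\mu$-linear term of $R$ is $\mu U_1(0,0,0,1)^\top$, and under $U=\Theta(A,B,\overline A,\overline B)$ one has $U_1=A+\overline A$; pulling this forcing back through $\Theta^{-1}$ gives $\mu(A+\overline A)\,\Theta^{-1}(0,0,0,1)^\top$, and a short solve yields $\Theta^{-1}(0,0,0,1)^\top=(-\tfrac i4,-\tfrac14,\tfrac i4,-\tfrac14)^\top$. Thus the $A$- and $B$-entries of $\widetilde R_{1,1}/\mu$ are $-\tfrac i4(A+\overline A)$ and $-\tfrac14(A+\overline A)$, with the $\overline A,\overline B$-entries their conjugates, while $N_{1,1}$ carries $-\tfrac i8\mu$ on the diagonal and $-\tfrac14\mu$ in the lower-left ($Q$) slot, read straight off Lemma \ref{lma:normalform}.

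The core step is to solve the homological equation. The operator $\mathrm{ad}_{\mathcal J}\colon M\mapsto M\mathcal J-\mathcal J M$ is not invertible: its kernel (the matrices commuting with $\mathcal J$) measures the non-uniqueness of $M$, while its cokernel equals $\ker\mathrm{ad}_{\mathcal J^*}$ and governs solvability. By the Fredholm alternative a solution $M$ exists if and only if $\widetilde R_{1,1}-N_{1,1}\in\mathrm{range}\,\mathrm{ad}_{\mathcal J}$, which forces $N_{1,1}$ to be precisely the projection of $\widetilde R_{1,1}$ onto the resonant space $\ker\mathrm{ad}_{\mathcal J^*}$; checking that this projection reproduces the linear-in-$\mu$ parts of $iP$ and $Q$ is the Fredholm step. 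It then remains to invert $\mathrm{ad}_{\mathcal J}$ on a complement of its kernel, which — upon expanding in the four linear monomials $A,B,\overline A,\overline B$ and matching coefficients — reduces to the systems of four linear equations promised in the remark, coupled across the $A$- and $B$-monomials by the off-diagonal $1$ of the Jordan block. Their solution is the matrix displayed in \eqref{eqn:mutrans}. Finally I would confirm that this $M$ respects the reality relation (its third and fourth rows are the conjugates of the first two) and the reversibility $\mathbf S\Phi(A,B;\mu)=\Phi(\overline A,-\overline B;\mu)$; these symmetries are inherited from $R$ and pin down the residual freedom in $\ker\mathrm{ad}_{\mathcal J}$.

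The hard part is the bookkeeping forced by the non-invertibility of $\mathrm{ad}_{\mathcal J}$: one must fix the splitting of the target space into resonant kernel and non-resonant range exactly as in \cite{ioossHaragus} (the complement taken to be $\ker\mathrm{ad}_{\mathcal J^*}$), so that the surviving resonant piece lands in $N_{1,1}$ rather than leaking into $\Phi_{1,1}$, all while keeping the conjugation and reversibility structure intact. Once that splitting is fixed the remaining computation is genuine linear algebra, and indeed one can short-circuit the whole derivation by simply verifying that $\mu\,\mathrm{ad}_{\mathcal J}(M)=\widetilde R_{1,1}-N_{1,1}$ for the matrix of \eqref{eqn:mutrans} — a direct substitution that closes the argument.
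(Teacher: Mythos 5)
Your proposal is correct and follows essentially the same route the paper indicates: running the Iooss--Haragus algorithm at order $\mu^1$ in the linear monomials, which reduces to the Fredholm alternative for the homological operator $\mathrm{ad}_{\mathcal J}$ plus solving small linear systems. Your computed forcing $\Theta^{-1}(0,0,0,1)^\top=(-\tfrac{i}{4},-\tfrac14,\tfrac{i}{4},-\tfrac14)^\top$ and resonant part are right, and a direct check confirms that the matrix in \eqref{eqn:mutrans} satisfies $\mu(M\mathcal J-\mathcal J M)=\widetilde R_{1,1}-N_{1,1}$ exactly, so the suggested short-circuit verification closes the argument.
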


\begin{rmk}
Lemma \ref{lma:nftransfn} describes a smooth unfolding of the $\mu$-dependent linear terms in the equation \eqref{eqn:jordanform}. This contrasts with a classical Jordan normal form transformation of these terms which is not smooth in $\mu$. In this context, the lemma may be compared to Section 3.2.2 of \cite{ioossHaragus} or to the original source \cite{arnold}.
\end{rmk}

There are a few important features of the normal form equation \eqref{eqn:thenormalform}. When truncated by removing $G$, it is equivariant under the reversibility symmetry $\mathbf{S}$ and a Gauge symmetry. Furthermore, it possesses a pair of conserved quantities. We will introduce these after further transformations.

\subsection{Additional Transformations}
To simplify notation, we set $\ep =\sqrt{|\mu|}$.
\begin{lma} \label{lma:othertrans}
There exists a change of coordinates which transforms Equation \eqref{eqn:thenormalform} into
\begin{align} \label{eqn:othertrans}
\begin{bmatrix} a_y\\ b_y \end{bmatrix} & = \begin{bmatrix} 0 & 1 \\0 & 0 \end{bmatrix} \begin{bmatrix} a\\ b \end{bmatrix} + \begin{bmatrix} i\ep \widetilde P(a,b;\ep) & 0 \\ \widetilde Q(a,b;\ep) & i\ep \widetilde P(a,b;\ep)\end{bmatrix} \begin{bmatrix} a\\ b \end{bmatrix} + \ep^{N-2}\widetilde G(a,b,y/\ep;\ep)
%\label{eqn:abrescaled1}
%\frac{da}{dy} & = b + a\ep\widetilde{P}(a,b;\ep) + \ep^{N-2} \widetilde{g}_{1}(a,b,y/\ep;\ep)\\
%\label{eqn:abrescaled2}
%\frac{db}{dy} & =  a \widetilde{Q} (a,b;\ep^2) +b \ep\widetilde{P}(a,b;\ep) + \ep^{N-2} \widetilde{g}_{2}(a,b,y/\ep;\ep)
%-\sgn(y) a+ a|a|^2+ \ep (B \widetilde{P}(a,b;\mu) +\tfrac{i}{8}a(a\overline b- \overline a b) + \mathcal{O}(|\mu|) + \mu^\frac{N-2}{2} \widetilde{g}_{2}(a,b,y/\sqrt\mu;\mu)
\end{align}
where the transformed terms in normal form are
\begin{align*}
\widetilde{P}(a,b,\ep) & = -\sgn(\mu)\frac{\dstypo{1}}{4}+ \frac{\dstypo{3}}{8}|a|^2 +\mathcal{O}\left(\ep^2\right)\\
\widetilde{Q}(a,b,\ep) &=  -\sgn(\mu) + |a|^2 + \ep \frac{i}{8}(a\overline{b}-\overline{a}b) + \mathcal{O}(\ep^3)
\end{align*}
and the remainder term $\widetilde{G} = 
%\begin{bmatrix}\widetilde{G}_1 \\ \widetilde{G}_2\end{bmatrix}=
\mathcal{O}((|a|+\ep|b|)^N)$ is periodic in the third variable.
\end{lma}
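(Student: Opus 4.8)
The plan is to realize the asserted change of coordinates as the composition of two conceptually distinct moves: first a \emph{gauge rotation} that removes the constant $i$'s on the diagonal of the linear part of \eqref{eqn:thenormalform}, turning it into the nilpotent linear part appearing in \eqref{eqn:othertrans}, and then a \emph{Ginzburg--Landau rescaling} of the amplitudes and of the spatial variable, keyed to $\ep=\sqrt{|\mu|}$, that brings the small parameter and the cubic terms to a common order so that the coefficients $\widetilde P,\widetilde Q$ stabilize to the displayed $\mathcal O(1)$ expressions. The first move is where the Gauge symmetry of the truncated normal form is used; the second is a balance-of-powers argument followed by a normalization of constants.

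First I would set $A=e^{ix}\widehat A$ and $B=e^{ix}\widehat B$. Because the coefficients $P$ and $Q$ in \eqref{eqn:thenormalform} depend only on the gauge-invariant quantities $|A|^2$ and $A\overline B-\overline A B$, they are left unchanged by this substitution, while the diagonal terms $iA,iB$ are absorbed into the $x$-derivative. This yields $\widehat A_x=\widehat B+iP\,\widehat A$ and $\widehat B_x=Q\,\widehat A+iP\,\widehat B$ with the \emph{same} $P,Q$. The remainder $G$, which is not in normal form and hence not gauge invariant, becomes $e^{-ix}G(e^{ix}\widehat A,e^{ix}\widehat B;\mu)$ and is therefore explicitly $2\pi$-periodic in $x$; this is precisely the origin of the periodic dependence on the third argument of $\widetilde G$.

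Next I would introduce rescaled variables of the form $y=\tfrac{\ep}{2}x$, $\widehat A=\tfrac{\ep}{\sqrt3}\,a$, and $\widehat B=\tfrac{\ep^2}{2\sqrt3}\,b$, substituting $\mu=\sgn(\mu)\,\ep^2$ everywhere. The exponents are dictated by balance: requiring the $\mu$-contribution $-\tfrac18\mu\sim\ep^2$ in $P$ to appear at the same order as the cubic $\tfrac{9}{16}|A|^2$ forces $|A|\sim\ep$, and requiring the linear coupling $\widehat A_x=\widehat B$ to survive forces $\widehat B\sim\ep\,\widehat A$ together with $y\sim\ep x$. The specific constants are then fixed by normalizing the coefficient of $b$ in the first equation to $1$ and the leading part of $\widetilde Q$ to $-\sgn(\mu)+|a|^2$. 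A direct substitution gives $\tfrac1\sigma P=\ep\widetilde P$ with $\widetilde P=-\tfrac14\sgn(\mu)+\tfrac38|a|^2+\mathcal O(\ep^2)$ (the same factor $\tfrac1\sigma P$ appearing in both equations, so the single $\widetilde P$ of \eqref{eqn:othertrans} is legitimate), and $\tfrac{1}{\sigma^2}Q=\widetilde Q$ with the stated expansion, the cross term $\ep\tfrac{i}{8}(a\overline b-\overline a b)$ emerging from $\tfrac{3i}{16}(A\overline B-\overline A B)$ after rescaling. Dividing the gauge-rotated remainder by the amplitude scale $\mathcal O(\ep^2)$ produces the prefactor $\ep^{N-2}$, and using $|A|+|B|=\mathcal O(\ep(|a|+\ep|b|))$ gives the bound $\widetilde G=\mathcal O((|a|+\ep|b|)^N)$, periodic in $y/\ep$.

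The main obstacle is that a single triple of scalings must simultaneously normalize several independent coefficients---the $\sgn(\mu)$ constant, the $|a|^2$ term, the $(a\overline b-\overline ab)$ cross term, and the unit coupling of $b$ in the first equation. That this over-determined system is consistent is not automatic; it works exactly because the coefficients delivered by Lemmas \ref{lma:normalform} and \ref{lma:nftransfn} occur in compatible ratios, so the real content of the proof is verifying that these ratios line up. A secondary technical point is the bookkeeping of the remainder: one must track the differing amplitude scales of the two components, confirm that after division by the scalings the error is controlled by $\ep^{N-2}\widetilde G$ and genuinely small once $N$ is taken large enough, and retain its rapid periodicity in the fast variable inherited from the gauge rotation.
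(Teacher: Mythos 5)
Your proposal is correct and takes essentially the same route as the paper: the paper's proof composes the co-rotating-frame map $\nu\colon(A_1,B_1,e^{ix})\mapsto(A_1e^{ix},B_1e^{ix})$ with the rescaling $\tau\colon(a,b,y)\mapsto\left(\tfrac{\ep}{\sqrt3}a,\tfrac{\ep^2}{2\sqrt3}b,\tfrac{2}{\ep}y\right)$, which are exactly your gauge rotation and amplitude/space rescaling with identical constants. Your write-up in fact supplies the ``algebra'' the paper omits, and your coefficient checks for $\widetilde P$ and $\widetilde Q$ (including the $\ep\tfrac{i}{8}(a\overline b-\overline a b)$ cross term) come out right.
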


\begin{proof}
The transformation may be realized as a composition of two transformations. First, we move to a co-rotating frame of reference. Second, we rescale $a, b,$ and space $x$.

To move to the co-rotating frame, we essentially apply the change of coordinates
\begin{align*}
\nu\colon \cc^2\times S^1 & \to \cc^2\\
(A_1,B_1, e^{i x}) & \mapsto (A_1e^{ix}, B_1e^{ix}) = (A,B).
\end{align*}
We may formalize this by introducing a rotation variable $\rho = e^{ix}$ and appending the equation $\rho_x  = i\rho$, then applying the transformation. We omit the details for brevity.

The rescaling transformation takes the form
\begin{align*}
\tau\colon \cc^2 \times \rr & \to \cc^2\times \rr\\
(a,b,y) & \mapsto \left(\tfrac{\ep}{\sqrt{3}} a,\tfrac{\ep^2}{2\sqrt{3}}b,\tfrac{2}{\ep}y\right) = (A_1,B_1,x)
\end{align*}
Equation \eqref{eqn:othertrans} follows after some algebra.
\end{proof}

In order to examine the dynamics of equation \eqref{eqn:othertrans}, we may examine the truncated system with $\widetilde G$ removed. To justify this, we must argue for the persistence in the full system of \typo{certain invariant manifolds} which we will find in the truncated system. First, by using a higher degree normal form, we make $N$ arbitrarily large and can thus control the size of these terms (since $\ep$ is small). However, the persistence we desire is not guaranteed by standard invariant manifold theory since $\widetilde G$ has a period which tends to $0$ as $\ep \to 0$. Thinking of $\widetilde G$ as a rapidly oscillating perturbation, we may imagine that its effects are averaged out over any given period (at least when $\ep$ is small enough). Indeed, this intuition holds. For a rigorous discussion of the persistence of the relevant manifolds, see \cite{ioossPeroueme}. With this consideration behind us, we may examine the truncated equation
\begin{align} \label{eqn:truncatedNF}
\begin{bmatrix} a_y\\ b_y \end{bmatrix} & = \begin{bmatrix} 0 & 1 \\0 & 0 \end{bmatrix} \begin{bmatrix} a\\ b \end{bmatrix} + \begin{bmatrix} i\ep \widetilde P(a,b;\ep) & 0 \\ \widetilde Q(a,b;\ep) & i\ep \widetilde P(a,b;\ep)\end{bmatrix} \begin{bmatrix} a\\ b \end{bmatrix} 
\end{align}
 
Now we turn to our problem of interest in equation \eqref{eqn:SHjump}, with a spatial inhomogeneity in the parameter $m(x) = \mu \sgn (x)$, for small $\mu>0$. We apply the preceding lemmas for $x<0$ and $x>0$ separately, obtaining two different equations in place of \eqref{eqn:truncatedNF}. After dropping the vector notation and plugging in $\widetilde P, \widetilde Q$, these become 
\begin{align} \label{eqn:firstord1}
\frac{da}{dy} & = b + \ep \left(\dstypo{-\sgn(y)}\frac{i}{4} a+ \frac{3i}{8}a |a|^2\right) + \mathcal{O}(\ep^2|a|)\\%red?
\label{eqn:firstord2}
\frac{db}{dy} & =  -\sgn(y)a + a|a|^2+ \ep\left( \dstypo{-\sgn(y)}\frac{i}{4} b+ \frac{3i}{8} b |a|^2 + \frac{i}{8}a(a\overline b-\overline a b) \right)+ \mathcal{O}(\ep^2|a|)+ \mathcal{O}(\ep^3|b|)%red?
\end{align}
where we have denoted two pairs of equations by the use of $\sgn (y)$ (recall that we have rescaled space $x = \tfrac{2}{\ep}y$).
For each choice of $y>0$ or $y<0$, equations \eqref{eqn:firstord1}--\eqref{eqn:firstord2} are equivariant under the reversibility symmetry $\mathbf S$ and also the Gauge symmetry $\mathbf R_\phi \colon (a,b) \mapsto (e^{i\phi}a,e^{i\phi}b)$. Additionally, they have the conserved quantities
\begin{align} \label{eqn:1stintl}
M & = \dstypo{-}\frac{i}{2}(a\overline b-\overline a b) = \Im(a\overline b), \quad\quad\quad
H^\pm(\ep) = |b|^2 \pm|a|^2-\tfrac{1}{2}|a|^4 + \ep \tfrac{1}{4}|a|^2M + \mathcal{O}(\ep^2). %Confirmed 7/28/2017 in handwritten notes "Recomputing the entry in second component of In[29]"
\end{align}
These conserved quantities may be computed by applying $\nu\circ \tau$ to the conserved quantities of the normal form, found in \cite[\S4.3.3]{ioossHaragus}.

\subsection{Different Phase Spaces} \label{sect:diff}
As mentioned above, Lemmas \ref{lma:normalform}--\ref{lma:othertrans} are valid for $x<0$ or $x>0$, but not both together. In particular, the normal form transformation of Lemma \ref{lma:nftransfn} can be computed for $-\mu$ and for $\mu$ independently. This yields two distinct transformations, two distinct pairs of equations, and two separate phase spaces $\cc^2_-, \cc^2_+$. See Figure \ref{fig:alltrans}. Recall that our goal is a heteroclinic gluing argument; we wish to intersect an unstable manifold in $\cc^2_-$ with a stable manifold in $\cc^2_+$. However, since these manifolds do not lie in the same phase spaces, such an intersection is meaningless without some further justification. 
%This is because the two manifolds are contained in different phase spaces at the bottom of the diagram in Figure \ref{fig:alltrans}. 
The next lemma provides the appropriate coordinate transformations to ``move" the unstable manifold from the phase space $\cc^2_-$ of the $y<0$ dynamics to the phase space $\cc^2_+$ of the $y>0$ dynamics. Then an intersection computed in the second phase space will meaningfully represent a transition from one invariant manifold to the second.

\begin{lma}\label{lma:movingtrans}
\change{For each $\ep>0$ sufficiently small, there exists a non-autonomous transformation $\mathcal T$ that maps the $y<0$ phase space to the $y>0$ phase space $\mathcal T \colon \cc^2_- \to \cc^2_+$. This transformation is defined by}
\begin{align}
\mathcal T(\ep) & \defeq %\tau ^{-1} \circ \nu^{-1} \circ \Theta_2^{-1}(\mu)\Theta_2 (-\mu) \circ \nu \circ \tau\\
\tau^{-1}\circ \nu^{-1} [\Id + \Phi_{1,1}(\ep^2)]^{-1}[\Id + \Phi_{1,1}(-\ep^2)]\circ \nu \circ \tau\\
& = \Id + \ep \begin{bmatrix} 
0 & 0 & 1 & 0 \\
\tfrac{-i}{2} & 0 & \frac{i}{2}e^{-2ix} & 0\\
0 & 0 & 0 & 0 \\
\tfrac{-i}{2}e^{2ix} & 0 & 0 & \tfrac{i}{2} \\
\end{bmatrix} + \mathcal{O}(\ep^2).
\end{align}
\end{lma}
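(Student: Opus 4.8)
The plan is to prove this by direct computation, exploiting the fact that the only $\mu$-dependent ingredient in the composition is the pair of linear maps $\Id + \Phi_{1,1}(\pm\ep^2)$, while $\nu$ and $\tau$ are explicit and depend on $\mu$ only through the scaling parameter $\ep$. First I would record that, by Lemma \ref{lma:nftransfn}, the map $\Id + \Phi_{1,1}(\,\cdot\,;\mu)$ is \emph{exactly} linear in $(A,B,\overline A,\overline B)$, so on the ambient $\cc^4$ it is represented by $\Id + \mu M$ with $M$ the constant matrix of Lemma \ref{lma:nftransfn}. Setting $\mu=\ep^2$ on the $x>0$ side and $\mu=-\ep^2$ on the $x<0$ side reduces the inner factor $[\Id+\Phi_{1,1}(\ep^2)]^{-1}[\Id+\Phi_{1,1}(-\ep^2)]$ to a product of two matrices that are each $\Id+\mathcal O(\ep^2)$.

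Second, I would invert $\Id+\ep^2 M$ by a Neumann series, valid for $\ep$ small since $\|\ep^2 M\|<1$, giving $[\Id+\ep^2 M]^{-1}=\Id-\ep^2 M+\mathcal O(\ep^4)$ and hence $[\Id+\Phi_{1,1}(\ep^2)]^{-1}[\Id+\Phi_{1,1}(-\ep^2)] = (\Id-\ep^2 M+\mathcal O(\ep^4))(\Id-\ep^2 M)=\Id-2\ep^2 M+\mathcal O(\ep^4)$. Thus the inner transformation differs from the identity only at order $\ep^2$, with explicit perturbation $-2\ep^2 M$.

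Third, and this is the step that actually produces the stated leading order, I would conjugate by $\nu\circ\tau$. As a coordinate map this composition is diagonal: $\tau$ rescales the two complex slots \emph{anisotropically}, $A_1=\tfrac{\ep}{\sqrt3}a$ and $B_1=\tfrac{\ep^2}{2\sqrt3}b$, while $\nu$ multiplies by the phase $e^{ix}$ (and $e^{-ix}$ on the conjugate slots). Writing $D=\nu\circ\tau$ as this diagonal matrix, I have $\mathcal T = D^{-1}(\Id-2\ep^2 M+\mathcal O(\ep^4))D = \Id + D^{-1}(-2\ep^2 M)D + \mathcal O(\ep^3)$, so each entry $M_{ij}$ is multiplied by a scaling ratio $\sigma_j/\sigma_i$ and a phase ratio $\pi_j/\pi_i$. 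The decisive observation is that $\sigma_j/\sigma_i$ equals $2/\ep$ precisely when $j$ indexes an $a$-slot and $i$ a $b$-slot; these off-block entries convert the $\mathcal O(\ep^2)$ perturbation into the $\mathcal O(\ep)$ leading term, while the diagonal-block entries remain $\mathcal O(\ep^2)$ and the opposite off-block entries are $\mathcal O(\ep^3)$. The phase ratios supply exactly the non-autonomous factors $e^{\pm 2ix}$ whenever $i$ and $j$ sit in slots of opposite co-rotation phase. Collecting the surviving entries from the relevant rows and columns of $M$ then assembles into an $\ep$-linear, non-autonomous matrix of the stated form.

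The main obstacle is conceptual rather than computational: one must recognize that an inner transformation that is only $\mathcal O(\ep^2)$ can nevertheless contribute at $\mathcal O(\ep)$, because the anisotropic rescaling $\tau$ boosts the $b$-from-$a$ block by a factor $\ep^{-1}$; overlooking this would suggest the incorrect conclusion $\mathcal T=\Id+\mathcal O(\ep^2)$. As a safeguard against sign and bookkeeping errors I would check the answer against the structural constraints already available: the truncated transformation must be compatible with the reversibility symmetry $\mathbf S$ and with the reality structure, so that the $\overline a$- and $\overline b$-rows are forced to be the conjugates of the $a$- and $b$-rows, and it must reduce to $\Id$ as $\ep\to0$. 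These constraints pin down the placement of the $e^{\pm 2ix}$ factors and the relative signs of the $\pm i/2$ coefficients.
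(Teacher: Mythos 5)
Your proposal is correct and is precisely the ``direct computation'' that the paper's one-line proof alludes to: invert $\Id+\ep^2 M$ by a Neumann series, conjugate by the diagonal map $\nu\circ\tau$, and observe that the anisotropic scaling ratio $2/\ep$ between the $a$- and $b$-slots promotes the $b$-from-$a$ block of the $\mathcal O(\ep^2)$ inner perturbation to $\mathcal O(\ep)$ while leaving every other block at higher order, with the phase ratios supplying the $e^{\pm 2ix}$ factors. Your closing symmetry check is in fact worth carrying out, since it forces rows $1$ and $3$ to vanish at order $\ep$ and forces row $4$ to be the conjugate-swap of row $2$; the matrix as displayed in the lemma (the $1$ in entry $(1,3)$, and the $\tfrac{i}{2}$ sitting in column $4$ rather than column $3$ of row $4$) violates these constraints and appears to contain typos, whereas the entries that are actually used downstream in \eqref{eqn:movedunstable} agree with what your computation produces.
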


\begin{proof} 
Figure \ref{fig:alltrans} shows how we arrive at the composition above. The form on the second line is obtained by direct computation.
\end{proof}

Now our task is to compute the unstable manifold of $(a,b) = (0,0)$ in the $y<0$ dynamics, transform it via $\mathcal T(\ep)$ into the phase space $\cc^2_+$ of the $y>0$ dynamics, and find its intersection with the strong stable foliation of periodic orbits. We first treat the $\ep = 0$ case.

\begin{figure}[h]
    \centering    
    $%\renewcommand{\labelstyle}{\large}
        \xymatrix@C=.75in{
        {} & {u\in\rr} & {}\\
        {} & {U\in\rr^4}\ar[u] & {}\\
        {} & {(A,B)\in\cc^2}\ar[u]_{\Theta} & {}\\
        {} & {(A,B)\in\cc^2}\ar[u]_{\Phi_{3,0}} & {}\\
        {(A,B)\in\cc^2}\ar[ur]^{\Phi_{1,1}(-\ep^2)} & {} & {(A,B)\in\cc^2}\ar[ul]_{\Phi_{1,1}(\ep^2)}\\
        {(A_1,B_1)\in\cc^2}\ar[u]_{\nu} & {} & {(A_1,B_1)\in\cc^2}\ar[u]_{\nu}\\
        {(a,b)\in\cc^2_-}\ar[u]_{\tau} \ar[rr]^{\mathcal T(\ep)} & {} & {(a,b)\in\cc^2_+}\ar[u]_{\tau}\\
%    {\rr^4} \\ %@(ul,dl)[] \ar @/^/[r] \\
%    {\cc^2} \ar[u]_{\Theta}\\
%    {\cc^2} \ar[u]_{\Theta_3}\\
%    {\cc^2} \ar[u]_{\Theta_2} \ar @/^2pc/[uuu]^{\Psi} %\ar @(ur,dr)[] \ar @/^/[l]
    }
    $
    \caption{A schematic summary of  transformations and variables.}
    \label{fig:alltrans}
\end{figure}
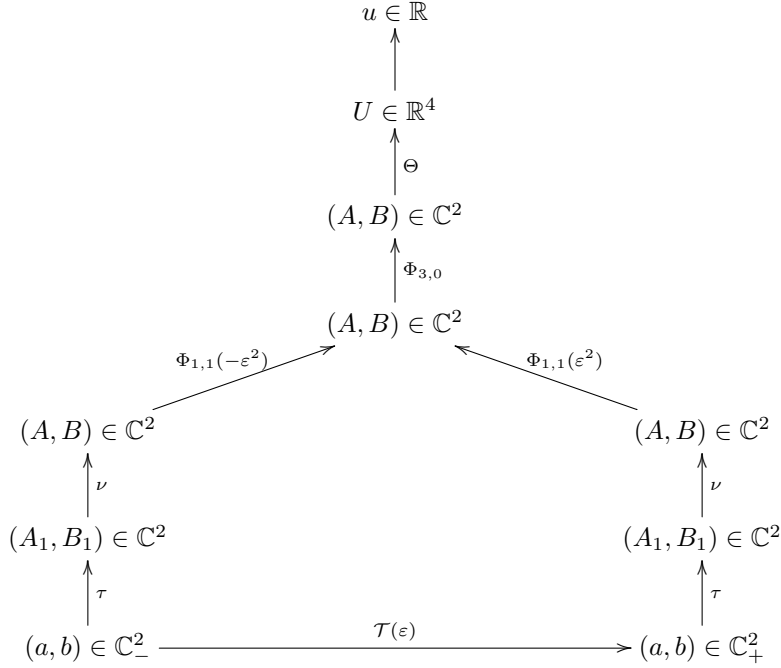

\subsection{The $\ep = 0$ Intersection} \label{sect:ep0}
We now describe the key structures in the dynamics of equations \eqref{eqn:firstord1}--\eqref{eqn:firstord2} with $\ep = 0$,\footnote{Note that for $y>0$, we have a version of the steady state real Ginzburg-Landau equations.}
\begin{align}
\label{eqn:1}
\frac{da}{dy} & = b\\
\label{eqn:2}
\frac{db}{dy} & =  %a \widetilde{Q} (a,b;\ep^2) +b \ep\widetilde{P}(a,b;\ep) + \ep^{N-2} \widetilde{g}_{2}(a,b,y/\ep;\ep)
-\sgn(y) a+ a|a|^2.
\end{align}
The $\ep\to 0$ limit is justifiable by standard ODE theory, specifically the smoothness in parameters of invariant manifolds.

In the $y<0$ dynamics, the origin $(0,0)$ is a hyperbolic equilibrium with an unstable manifold $\mathcal{W}_0^u$. One may restrict to the real subspace, employ the conserved quantity $H^-(0)$, and apply the Gauge symmetry $\mathbf R_\phi$ to obtain an explicit parametrization\begin{align*}
\mathcal{W}_0^u & = \left\{ (e^{i\phi}r,e^{i\phi}\sqrt{r^2+\tfrac{1}{2}r^4})\mid r\in[0,\infty) \text{ and } \phi\in[0,2\pi)\right\}. %=\image w_0^-%
\end{align*}

In the $y>0$ dynamics, there exists a family of periodic orbits
$$ (a_*,b_*)(y;\kappa) = (\sqrt{1-\kappa^2} e^{i\kappa y}, i\kappa \sqrt{1-\kappa^2} e^{i\kappa y}), \quad \quad -1<\kappa<1.$$
The subset  $\mathcal P = \{(a_*,b_*) \mid \kappa^2<\tfrac{1}{3}\}$ forms a normally hyperbolic manifold, which can be seen by computing the Floquet exponents of the linearization at $(a_*,b_*)$. The manifold $\mathcal P$ has a strong stable foliation $\mathcal W_{\mathcal P}^s$ which consists of a bounded and unbounded branch. For us, the bounded branch is relevant and may be constructed using the Gauge symmetry and the defect solutions
\begin{align} \label{eqn:Ad}
a_d(y;\kappa ) & = \left( \sqrt{2} \kappa  + i\sqrt{1-3\kappa ^2}\tanh{(\sqrt{1-3\kappa ^2}y/\sqrt{2})}\right)e^{i\kappa y},\quad\quad b_d(y;\kappa) =\pt{a_d}{y} , \quad \quad \text{for}\quad \quad \kappa^2<\tfrac{1}{3}.
\end{align}
Thus we also obtain an explicit parametrization for the bounded branch of the stable foliation,
\begin{align} \label{eqn:setstabfoln}
\mathcal W_{\mathcal{P}, b}^s = \left\{ \left(e^{i\psi} a_d(y;\kappa),e^{i\psi}b_d(y;\kappa)\right)\mid y\in \rr, \ \kappa^2<\tfrac{1}{3},\ \text{and}\ \psi\in[0,2\pi)\right\}\subset\mathcal W^s_\mathcal{P}.
%\subseteq \mathcal{W}^s_\mathcal{P}.
\end{align}
Additional details of the $y>0$ dynamics may be found \cite[\S4]{scheelMorrissey} and references there.

Since $\mathcal T (0) = \Id$, we may immediately intersect the two invariant manifolds.

\begin{lma} \label{lma:intersection}
With $\ep = 0$, the intersection is given through 
\begin{equation} \label{eqn:0intersection}
\mathcal{W}_0^u \cap \mathcal{W}^s_\mathcal{P} = 
\left\{
  \left(
    e^{i\phi}\frac{1}{2},e^{i\phi}\frac{3}{4\sqrt 2}
  \right),
  \phi\in[0,2\pi)
\right\}
\end{equation}
and occurs for parameter values $r = \tfrac{1}{2}, y = \sqrt{2}\arctan\tfrac{1}{2}, \kappa = 0,$ and any $\phi,\psi$ such that $\psi = \phi-\frac{\pi}{2} \mod 2\pi$.%$w_0^-(\tfrac{1}{2},\phi)=w_0^+(\sqrt{2}\arctan\tfrac{1}{2},0,\psi)$ for exactly $\psi = \phi-\frac{\pi}{2} \mod 2\pi$. 
\end{lma}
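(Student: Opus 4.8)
The plan is to set the two explicit parametrizations equal and solve for the free parameters, using the Gauge symmetry and the two conserved quantities rather than a brute-force match of the complex coordinates. Since $\mathcal T(0)=\Id$, both $\mathcal W_0^u$ and $\mathcal W^s_{\mathcal P}$ live in the same copy of $\cc^2$, so a point lies in the intersection precisely when the quantities $M=\Im(a\overline b)$ and $H^\pm(0)$ take compatible values on the two manifolds. Because $M$ and $H^\pm(0)=|b|^2\pm|a|^2-\tfrac12|a|^4$ depend only on $|a|,|b|$ and $\Im(a\overline b)$, they are invariant under the Gauge action $\mathbf R_\phi$, which lets me fix the interior parameters $\kappa,r,y$ first and recover the phases $\phi,\psi$ afterward.

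First I would use $M$ to pin down $\kappa$. Every point of $\mathcal W_0^u$ lies, up to the common gauge factor, in the real subspace, so that $M=\Im(a\overline b)=0$ there; this is immediate from the explicit parametrization. On the strong stable fiber over the periodic orbit $(a_*,b_*)(\wc;\kappa)$, the gauge-invariant $M$ equals its value on the base orbit, namely $M=\Im(a_*\overline{b_*})=-\kappa(1-\kappa^2)$. Matching forces $-\kappa(1-\kappa^2)=0$, and since $1-\kappa^2>0$ for $|\kappa|<1$ the only root is $\kappa=0$. This is the crucial simplification: a single conserved quantity collapses the whole family of admissible asymptotic orbits to the untilted one.

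With $\kappa=0$ fixed, I would impose the two energy constraints. A point of $\mathcal W_0^u$ satisfies $H^-(0)=|b|^2-|a|^2-\tfrac12|a|^4=0$ (its value at the origin), while a point of the stable fiber satisfies $H^+(0)=|b|^2+|a|^2-\tfrac12|a|^4=\tfrac12(1-\kappa^2)(1+3\kappa^2)=\tfrac12$ at $\kappa=0$; constancy of $H^+(0)$ along the defect solution together with its $y\to\infty$ limit gives this value. Adding and subtracting these two linear relations yields $|a|^2=\tfrac14$ and $|b|^2=\tfrac9{32}$, i.e. $r=\tfrac12$ and $|b|=\tfrac{3}{4\sqrt2}$, exactly the claimed point. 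To recover the remaining parameters I would substitute $\kappa=0$ into \eqref{eqn:Ad}, giving $a_d(y;0)=i\tanh(y/\sqrt2)$; matching $|a_d(y;0)|=r=\tfrac12$ determines $y$ through $\tanh(y/\sqrt2)=\tfrac12$, while matching the phase in $e^{i\phi}r=e^{i\psi}a_d(y;0)$ contributes the factor $i=e^{i\pi/2}$ and hence $\psi=\phi-\tfrac{\pi}{2}\bmod 2\pi$, with $\phi$ free.

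I do not expect a deep obstacle, since every object is in closed form and the work is essentially bookkeeping. The point requiring care is uniqueness: I should confirm that $\kappa=0$ and $r=\tfrac12$ are the only admissible values (immediate from the monotonicity just used) and that $\tanh(y/\sqrt2)=\tfrac12$ selects a single $y$ on the bounded branch, so that the intersection is exactly one Gauge circle parametrized by $\phi$. The most error-prone step is the phase matching, where one must track the factor $i$ in $a_d(y;0)$ and the Gauge actions on \emph{both} manifolds consistently in order to land correctly on $\psi=\phi-\tfrac{\pi}{2}$.
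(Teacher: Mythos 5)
Your proposal is correct and follows essentially the same route as the paper: the conserved quantity $M$ forces $\kappa=0$ exactly as in the paper's proof, after which the remaining parameters are solved for explicitly (you use $H^{\pm}(0)$ to extract the moduli $|a|,|b|$ where the paper matches real and imaginary parts directly, a cosmetic difference). Note that your computation yields $y=\sqrt{2}\,\arctanh\tfrac{1}{2}$ from $\tanh(y/\sqrt{2})=\tfrac{1}{2}$, which is the correct value; the $\arctan$ in the lemma statement appears to be a typo.
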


\begin{proof}
We search only for the intersection $\mathcal{W}_0^u\cap \mathcal W_{\mathcal{P},b}^s$ with the bounded part of the stable manifold; one can verify that the unbounded branch of $\mathcal W_\mathcal{P}^s$ does not intersect the unstable manifold $\mathcal W_0^u$ but we omit the tedious details here since they are not relevant for our main result. We set equal the two parameterizations above and solve.% for parameter values $r,y,\kappa,$ and $\psi$.

By using the conserved quantity $M =\Im( a \overline b)$, we can show that $\kappa =0$. This significantly simplifies the algebra and allows us to obtain an exact solution. First, note that $M\vert_{\mathcal{W}_0^u} = 0$, so $M\vert_{\mathcal W_{\mathcal{P},b}^s} = \Im(a_d\overline b_d) =0$. By the specific structure of $W_{\mathcal{P},b}^s$, we know that $\kappa$ is constant along solutions contained in the stable foliation (this is also evident from our parameterization). Since $\Im (a_*\overline{b_*}) = -\kappa(1-\kappa^2)$ and $(a_d,b_d)(\kappa) \to (a_*,b_*)(\kappa)$, we must have $-\kappa(1-\kappa^2) = 0$. Since $\kappa^2<\tfrac{1}{3}$, we see that $\kappa =0$. Plugging $\kappa =0$ into the equations and considering real and imaginary parts allows us to explicitly solve for the resulting parameter values.
\end{proof}

\subsection{The $\ep > 0$ Intersection}
Here we consider the dynamics of the truncated normal form equations \eqref{eqn:firstord1}--\eqref{eqn:firstord2} with $\ep>0$. We first establish that the intersection found in the $\ep = 0$ case is transverse, and thus persists for $\ep>0$. By standard ODE theory, the relevant structures $\mathcal W^u_{0}, \mathcal P,$ and  $\mathcal W_\mathcal{P}^s$ persist with an order-$\ep$ correction. We write $\mathcal W^u_{0,\ep}, \mathcal P(\ep),$ and $\mathcal W^s_{\mathcal P(\ep)}$ for these $\ep$-dependent objects.

\begin{lma} \label{lma:transverse}
There exists a neighborhood $\mathcal M\subset \rr_{\geq 0}$ of $0$ such that for $\ep\in\mathcal M$, the intersection
in Lemma \ref{lma:intersection} persists with an order-$\ep$ correction.
\end{lma}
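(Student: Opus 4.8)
The plan is to reduce the statement to a transversality computation at $\ep = 0$ and then invoke persistence of transverse intersections under an $O(\ep)$ perturbation. Since $\mathcal W^u_{0,\ep}$, $\mathcal P(\ep)$ and $\mathcal W^s_{\mathcal P(\ep)}$ persist with order-$\ep$ corrections by standard ODE theory, and $\mathcal T(\ep) = \Id + O(\ep)$ by Lemma~\ref{lma:movingtrans}, all the data depend smoothly on $\ep$ and reduce at $\ep = 0$ to the explicit objects of Section~\ref{sect:ep0}. I would transport the unstable manifold into the $y>0$ phase space by $\mathcal T(\ep)$ and define the difference map
\begin{equation*}
\Delta_\ep(r,\phi,y,\kappa,\psi) \defeq \mathcal T(\ep)\bigl(G^u_\ep(r,\phi)\bigr) - G^s_\ep(y,\kappa,\psi) \in \cc^2_+ \cong \rr^4,
\end{equation*}
where $G^u_\ep$ parametrizes $\mathcal W^u_{0,\ep}$ by $(r,\phi)$ and $G^s_\ep$ parametrizes the bounded stable foliation $\mathcal W^s_{\mathcal P(\ep),b}$ by $(y,\kappa,\psi)$, reducing to the parametrization of $\mathcal W_0^u$ and to \eqref{eqn:setstabfoln} at $\ep = 0$. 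An intersection point is exactly a zero of $\Delta_\ep$.

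The one structural subtlety is the Gauge symmetry $\mathbf R_\phi$: both manifolds are $\mathbf R_\phi$-invariant, so every zero of $\Delta_\ep$ lies on a full $S^1$-orbit and no zero can be isolated with invertible Jacobian. I would account for this by fixing the Gauge, say $\phi = 0$, which turns $\Delta_\ep(r,0,y,\kappa,\psi) = 0$ into four real equations in the four unknowns $(r,y,\kappa,\psi)$. The transversality condition to verify is that the Jacobian $D_{(r,y,\kappa,\psi)}\Delta_0$ be invertible at the $\ep = 0$ solution $(r,y,\kappa,\psi) = (\tfrac12, \sqrt2\arctan\tfrac12, 0, -\tfrac\pi2)$; this is equivalent to $T_p\mathcal W_0^u + T_p\mathcal W^s_{\mathcal P} = \rr^4$, in which case the one-dimensional common tangent $T_p\mathcal W_0^u \cap T_p\mathcal W^s_{\mathcal P}$ is precisely the Gauge direction.

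To check invertibility I would differentiate the explicit $\ep = 0$ parametrizations at $p$. Observe first that $\partial_\phi G^u_0$ and $\partial_\psi G^s_0$ both equal the infinitesimal Gauge action (multiplication by $i$) at $p$, so it suffices to show that $\partial_r G^u_0$, $\partial_y G^s_0$, $\partial_\kappa G^s_0$, together with this common direction $i p$, span $\rr^4$. The first three come from differentiating $(e^{i\phi}r, e^{i\phi}\sqrt{r^2 + \tfrac12 r^4})$ in $r$ at $r = \tfrac12$ and differentiating $(e^{i\psi}a_d(y;\kappa), e^{i\psi}b_d(y;\kappa))$ in $y$ and $\kappa$ at $\kappa = 0$, $y = \sqrt2\arctan\tfrac12$ via \eqref{eqn:Ad}. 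I expect the main obstacle to be $\partial_\kappa a_d|_{\kappa=0}$: the phase factor $e^{i\kappa y}$ contributes a secular term of the form $iy\,a_d$ on top of the algebraic derivatives of $\sqrt2\,\kappa$ and $\sqrt{1-3\kappa^2}$, and organizing these so as to confirm that the resulting $4\times 4$ determinant is nonzero is the crux of the argument. One can lighten the computation using the conserved quantities of \eqref{eqn:1stintl}: $M = \Im(a\overline b)$ and $H^-(0)$ are constant along $\mathcal W_0^u$ and along the fibers of $\mathcal W^s_{\mathcal P}$, so their differentials annihilate part of each tangent space and, as in Lemma~\ref{lma:intersection}, force $\kappa = 0$, reducing the rank check to a lower-dimensional one.

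Once invertibility is established, the implicit function theorem yields a neighborhood $\mathcal M \subset \rr_{\geq 0}$ of $0$ and a unique smooth branch $(r(\ep), y(\ep), \kappa(\ep), \psi(\ep))$ of zeros of $\Delta_\ep(\cdot,0,\cdot,\cdot,\cdot)$, agreeing with the $\ep = 0$ values up to $O(\ep)$. Applying $\mathbf R_\phi$ recovers the full intersection as the order-$\ep$ perturbation of the Gauge orbit \eqref{eqn:0intersection}, which is the asserted persistence.
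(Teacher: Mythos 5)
Your proposal is correct and follows essentially the same route as the paper: transport $\mathcal W^u_{0,\ep}$ by $\mathcal T(\ep)$, form the difference of the two explicit parametrizations, verify that the $4\times 4$ Jacobian at the $\ep=0$ intersection is invertible (the paper computes $\det D_{(r,\phi,y,\kappa)}F=-4$ with $\psi$ frozen as a parameter), and conclude by the implicit function theorem. The only cosmetic difference is which angular variable you gauge-fix --- you freeze $\phi$ and keep $\psi$ as an unknown, the paper does the reverse --- and since $\partial_\phi$ and $\partial_\psi$ both generate the same Gauge direction at the intersection point, the two transversality conditions are identical.
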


\begin{proof}
We may write the same parameterizations as in Section \ref{sect:ep0} for $\mathcal W^u_{0,\ep}$ and $\mathcal W^s_{\mathcal P(\ep)}$ with placeholder terms of order $\mathcal O(\ep)$ (the specific nature of these terms is not relevant to the current argument).
After ``moving" $\mathcal W^u_{0,\ep}$ with the transformation from Lemma \ref{lma:movingtrans}, we have that 
\begin{align}\label{eqn:movedunstable}
 \widehat{\mathcal W^u_{0,\ep}} & \defeq \mathcal{T}(\ep)\left[\mathcal W^u_{0,\ep}\right]=\left\{ e^{i\phi}\left(r+\mathcal{O}(\ep^2),\sqrt{r^2+\tfrac{1}{2}r^4}+\ep(-ir/2+ir e^{-2i\phi}/2 )+\mathcal{O}(\ep^2)\right)\right\}.
\end{align}
%$$\widehat{\mathcal W^u_{0,\ep}}= \mathcal T(\ep) \mathcal W^u_{0,\ep = \mathcal{W}_{0}^u + \mathcal O(\ep).$$

Define $F(r,\phi,y,\kappa;\psi,\ep)$ as the difference of the parameterizations for the two manifolds. Thus, we have $F(\tfrac{1}{2},\phi,\sqrt{2}\arctan\tfrac{1}{2},0;\phi-\tfrac{\pi}{2},0)=0$ at the intersection. After a calculation, we find that for any $\phi\in [0,2\pi)$ we have 
\[
\det \left (D_{(r,\phi,y,\kappa)}F (\tfrac{1}{2},\phi,\sqrt{2}\arctan\tfrac{1}{2},0;\phi-\tfrac{\pi}{2},0) \right ) = -4
% \begin{bmatrix}
% & & & \\
% & & & \\
% & & & \\
% & & & \\
%\end{bmatrix}
\]
So the derivative of $F$ is invertible at the intersection and the Implicit Function Theorem implies a nearby zero of $F$ for $\ep>0$ in a neighborhood $\mathcal M$.
\end{proof}

Our goal is to use an explicit expression for this intersection to obtain the order-$\ep$ correction to the wavenumber~$\kappa$. To do so, we may again use the parametrization for the the unstable manifold $\mathcal W^u_{0,\ep}$. However, we will need to use the conserved quantities in \eqref{eqn:1stintl} to easily access next order terms of $\mathcal W^s_{\mathcal P(\ep)}$. The full argument follows.

% 1) Transform $\mathcal W^u_{0,\ep}$ into the phase space for the $y>0$ dynamics. 
% 2) Next, we evaluate the conserved quantities in \eqref{eqn:1stintls} at this transformed manifold $\widehat{\mathcal W^u_{0,\ep}}$.
% 3) Set these expressions equal to the conserved quantities evaluated at $\ep$-perturbed periodic solutions. 
% 4) Comparing terms of equal orders in these equations, we may compute the leading-order value for the wavenumber $\kappa$ in terms of the parameter $\ep$ and the phase $\phi$ along the intersection. 
% 5) Lastly, we relate $\phi$ to the phase of the asymptotic pattern $\theta$. Details follow.

First, working in the $y<0$ phase space $\cc^2_-$, we will show that $\mathcal W^u_{0,\ep} = \mathcal W_0^u$ at order $\mathcal O (\ep)$. To see this, recall the conserved quantities \typo{$M$} and $H^-(\ep)$ expressed in equation \eqref{eqn:1stintl}. We compute that $\{ \nabla M, \nabla H^-(0)\}$ are linearly independent at the $\ep = 0$ intersection $\mathcal{W}_{0}^u\cap \mathcal W_\mathcal{P}^s$ given by equation \eqref{eqn:0intersection}. Thus the level set $\{M\equiv 0,\quad H^-(\ep)\equiv 0\}$ is a manifold with an explicit local approximation. % given by truncating higher order terms in $\ep$. 
Since this level set contains the origin, it contains the unstable manifold $\mathcal W^u_{0,\ep}$ (thus providing us with an approximate expression for $\mathcal W^u_{0,\ep}$ near the $\ep=0$ intersection). Notice that the leading-order $\ep$ of the conserved quantity $H^-(\ep)$ appears in a term with a factor  \typo{$M$}. Since $M\equiv 0$ on $\mathcal W^u_{0,\ep}$, this order-$\ep$ term has no effect.

Now we turn to the dynamics on the positive real line $y>0$. As mentioned above, we have the $\ep$-dependent family of periodic orbits $\mathcal{P}(\ep)$. Since \eqref{eqn:firstord1}--\eqref{eqn:firstord2} are equivariant under the Gauge symmetry $\mathbf R_\phi$, we may assume that the periodic orbits have the form 
\begin{align*}
a_*(\kappa ;\ep) & = s(\kappa ;\ep) e^{i \kappa y}\\
b_*(\kappa ;\ep) & = \left( p(\kappa ;\ep) + i q(\kappa ;\ep)\right) e^{i \kappa y}
\end{align*}
for real functions $s,p,q\colon ( -\tfrac{1}{\sqrt{3}},\tfrac{1}{\sqrt{3}}) \to \rr$. It is sufficient to restrict to a real amplitude for $a_*$ because we may combine the Gauge symmetry with the translation invariance of \eqref{eqn:firstord1}--\eqref{eqn:firstord2} to ``rotate" the solution pair so that $a_*$ is real when $y=0$. Plugging $(a_*,b_*)$ into \eqref{eqn:firstord1}--\eqref{eqn:firstord2}, we find that
\begin{align*}
s(\kappa;\ep) & = \sqrt{1-\kappa^2}-\ep \frac{\kappa^3}{4\sqrt{1-\kappa^2}} + \mathcal{O}(\ep^2)\\
p(\kappa;\ep) & = 0 + \mathcal{O}(\ep^2)\\
q(\kappa;\ep) & = k\sqrt{1-\kappa^2}-\ep \frac{-5\kappa^4+4\kappa^2-1}{8\sqrt{1-\kappa^2}} + \mathcal{O}(\ep^2).
\end{align*}

Next, we must investigate the $\mathcal O(\ep)$ terms of $\mathcal W_{\mathcal{P}(\ep)}^s$. As above, a direct computation shows that $\{ \nabla M, \nabla H^+(0)\}$ are linearly independent at the $\ep = 0$ intersection $\widehat{\mathcal{W}_{0}^u}\cap \mathcal W_{\mathcal{P}}^s$. Thus, the level set of the conserved quantities forms a manifold with a local expression near the intersection $\widehat{\mathcal{W}_{0,\ep}^u}\cap \mathcal W_{\mathcal{P}(\ep)}^s$ for small $\ep>0$.

Evaluating each of the conserved quantities at the relative equilibria $(a_*,b_*)$, we obtain functions of $\kappa,\ep$
\begin{align*} 
\mathcal{M} (\kappa;\ep) & \defeq M\vert_{(a_*,b_*)} =  -\kappa(1-\kappa^2) + \ep \frac{1}{8}(7\kappa^4-4\kappa^2+1)+\mathcal{O}(\ep^2)\\
\mathcal{H}(\kappa;\ep) & \defeq H^+(\ep)\vert_{(a_*,b_*)} = \frac{1}{2}(1-\kappa^2)(1+3\kappa^2) + \ep \frac{1}{2} \kappa(-4\kappa^4+3\kappa^2-1)+\mathcal{O}(\ep^2).
\end{align*}
For each $\kappa^2<\tfrac{1}{3}$, the level set $\{M\equiv \mathcal{M}(\kappa;\ep) ,\quad H^+(\ep)\equiv \mathcal{H}(\kappa;\ep) \}$ is a manifold containing the periodic solution $(a_*,b_*)(\kappa;\ep)$ and thus contains the stable manifold of the periodic solution. Thus, we have an expression (valid near the intersection $\widehat{\mathcal{W}_{0,\ep}^u}\cap \mathcal W_{\mathcal{P}(\ep)}^s$) which gives a leading-order approximation of the strong stable manifold of the periodic orbits
$$\mathcal{W}_{\mathcal{P}(\ep)}^s \subseteq \bigcup_{\kappa^2<\tfrac{1}{3}}\{ M =\mathcal{M} (\kappa;\ep), \quad H^+(\ep) = \mathcal{H} (\kappa;\ep) \}.$$

Evaluating \typo{$M, H^+(\ep)$} at the ``moved" unstable manifold $\widehat{\mathcal W^u_{0,\ep}}$ gives us
$$ M\vert_{\widehat{\mathcal W^u_{0,\ep}}} = \ep \frac{r^2}{2}(1-\cos{2\phi}), \quad \quad H^+(\ep)\vert_{\widehat{\mathcal W^u_{0,\ep}}} = 2r^2 +\ep r \sqrt{r^2+\tfrac{1}{2}r^4}\sin{2\phi} + \ep^2(1-\cos {2\phi})(r^2+\frac{r^4}{8}).$$
As above, these now provide leading-order approximations for the level set in the $y>0$ phase space which contains the transformed unstable manifold $\widehat{\mathcal W^u_{0,\ep}}$.

We are now ready to prove the following lemma.

\begin{lma}
For $\ep>0$, there is a one-dimensional intersection $\widehat{\mathcal W_{0,\ep}^u}\cap \mathcal{W}_{\mathcal{P}(\ep)}^s$. On the intersection, one has
\begin{align*}
\kappa & = \frac{\ep}{8}\cos(2\phi) + \mathcal O(\ep^2).
\end{align*}
\end{lma}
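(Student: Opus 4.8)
The plan is to extract the leading coefficient of $\kappa$ by exploiting the conserved quantity $M$, rather than by solving the full pair of matching equations. By Lemma~\ref{lma:transverse} we already know that the intersection $\widehat{\mathcal W^u_{0,\ep}}\cap\mathcal W^s_{\mathcal P(\ep)}$ persists as a one-parameter family, parameterized by the Gauge angle $\phi$, with an $\mathcal O(\ep)$ correction to the $\ep=0$ data of Lemma~\ref{lma:intersection}; in particular, along this family $r=\tfrac12+\mathcal O(\ep)$ and $\kappa=\mathcal O(\ep)$. It therefore remains only to pin down the order-$\ep$ coefficient of $\kappa$, and for this a single scalar relation suffices.

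The key observation is that $M$ is a first integral of the $y>0$ flow and that the bounded strong-stable fibre through an intersection point converges to its base periodic orbit $(a_*,b_*)(\kappa;\ep)$. Hence the value of $M$ at any intersection point equals the quantity $\mathcal M(\kappa;\ep)$ computed on that periodic orbit above. On the other hand, the same point lies on the moved unstable manifold $\widehat{\mathcal W^u_{0,\ep}}$ of \eqref{eqn:movedunstable}, where $M$ has already been evaluated explicitly as $M\vert_{\widehat{\mathcal W^u_{0,\ep}}}=\ep\,\tfrac{r^2}{2}(1-\cos 2\phi)$. Equating the two expressions yields the scalar equation
\[
\ep\,\frac{r^2}{2}(1-\cos 2\phi)=\mathcal M(\kappa;\ep)=-\kappa(1-\kappa^2)+\ep\,\tfrac18(7\kappa^4-4\kappa^2+1)+\mathcal O(\ep^2),
\]
which must hold at every point of the intersection.

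I would then solve this perturbatively. Substituting $r=\tfrac12+\mathcal O(\ep)$ and $\kappa=\ep\kappa_1+\mathcal O(\ep^2)$, the order-$\ep^0$ balance forces $\kappa=0$, recovering Lemma~\ref{lma:intersection}. At order $\ep^1$ the term $-\kappa(1-\kappa^2)$ contributes only $-\kappa_1$, the $\mathcal O(\ep)$ part of $\mathcal M$ contributes $\tfrac18$ at $\kappa=0$, and the left-hand side contributes $\tfrac18(1-\cos2\phi)$; matching gives $\kappa_1=\tfrac18\cos2\phi$, that is $\kappa=\tfrac{\ep}{8}\cos2\phi+\mathcal O(\ep^2)$, as claimed. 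Note that the leading coefficient of $\kappa$ is insensitive to the $\mathcal O(\ep)$ correction in $r$, since the whole $M$-equation is already of order $\ep$; the companion matching of $H^+(\ep)$ (or simply Lemma~\ref{lma:transverse}) is needed only to fix $r=\tfrac12$ at leading order and to confirm that the family is genuinely one-dimensional.

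The point requiring the most care is the legitimacy of the conserved-quantity matching: one must verify that the value of $M$ at a crossing point truly coincides with the value $\mathcal M(\kappa;\ep)$ carried by the asymptotic periodic orbit, which rests on $M$ being invariant and on the fibre limiting onto the \emph{bounded} branch of the foliation, and that the selected crossing is the bounded one identified in Lemma~\ref{lma:intersection}. Once this is granted, the remaining algebra is routine bookkeeping, and the single relation coming from $M$ already delivers the selected wavenumber.
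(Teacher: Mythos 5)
Your argument is correct and takes essentially the same route as the paper: the paper's proof also extracts the coefficient by equating the conserved quantities evaluated on $\widehat{\mathcal W^u_{0,\ep}}$ with their values $\mathcal M(\kappa;\ep)$, $\mathcal H(\kappa;\ep)$ on the asymptotic periodic orbits and expanding in $\ep$. The only (harmless) difference is that the paper imposes both relations and solves the resulting system for $(\kappa,r)$, using transversality to select the branch with $(\kappa,r)\to(0,\tfrac12)$, whereas you use only the $M$-relation and import $r=\tfrac12+\mathcal O(\ep)$ from Lemma \ref{lma:transverse}; since the $M$-equation is already of order $\ep$, this shortcut legitimately yields the same coefficient.
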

%This stable manifold is a perturbation of the one computed in the $\ep=0$ situation.

\begin{proof}
The first statement is a consequence of Lemma \ref{lma:transverse}. For the second, we set 
$$ M\vert_{\widehat{\mathcal W^u_{0,\ep}}} = \mathcal{M}(\kappa;\ep), \quad \quad \text{and}\quad \quad H^+(\ep)\vert_{\widehat{\mathcal W^u_{0,\ep}}} = \mathcal{H}(\kappa;\ep), $$%\quad \text{for} \quad r=1/2+\mathcal{O}(\ep^2),$$
expand in $\ep$, and compare terms of the same order in $\ep$. Solving the resulting system of equations, we obtain
$\kappa = \frac{\ep}{8}\cos{(2\phi)}$ and $r = \tfrac{1}{2}$ as one solution.

When $\ep=0$, we know the intersection occurs at $\kappa=0$ and with $r=\tfrac{1}{2}$. By the transversality mentioned above, we know that the intersection persists and remains unique under the $\ep$-perturbation. Other solution pairs $(\kappa,r)$ do not have the property that $(\kappa,r)\to (0,\tfrac{1}{2})$ as $\ep\to 0$.
%They either are artifacts of our methods for solving the equations explicitly or represent the intersection $\widehat{\mathcal W^u_{0,\ep}}\cap \mathcal W_{\mathcal P (\ep)}^u$. %% I believe that I confirmed that there is at least one solution in each of these.
%Thus, this approximate solution of the equations is the $\mathcal{O}(\ep)$ expansion of the solution representing the unique intersection point.
\end{proof}

\subsection{Proof of Theorem \ref{thm:mainthm}}

\begin{proof}[Proof of Theorem \ref{thm:mainthm}]
Let $\mathcal{V}\subseteq \rr_{\geq 0}$ be a neighborhood of $0$ contained in the neighborhoods guaranteed by Lemmas \ref{lma:normalform} and \ref{lma:transverse}. Take $\mu \in \mathcal{V}$ and for each $\phi\in[0,2\pi)$, let $(a,b)_\phi\in\cc^2$ be a point on the intersection $\widehat{\mathcal W_{0,\ep}^u}\cap \mathcal{W}_{\mathcal{P}(\ep)}^s$ such that $\phi$ is the phase of the leading order terms of $a$ as in the parameterization in \eqref{eqn:movedunstable}. The existence of $(a,b)_\phi$ is guaranteed by the Implicit Function Theorem in Lemma \ref{lma:transverse}. Let $\Psi(\mu) = \Theta \circ (\Id +\Phi_{3,0})\circ (\Id + \Phi_{1,1}(\mu))$ be the full normal form transformation from Section \ref{subsect:NF}. Let $U(x;\theta)$ be a solution to equation \eqref{eqn:U} with initial condition 
$$U(0;\theta) = \Psi(\mu)\left(\nu\circ\tau((a,b)_\phi)); \mu \right).$$
Take $u(x;\theta) = U_1(x;\theta)$, the first component. 

In the limit $x\to -\infty$, since 
$$[\Psi(-\mu)]^{-1}\left(\nu^{-1}\circ\tau^{-1} (U(x;\theta))\right)\in \mathcal W_{0,\ep}^u,$$
we know that $U(x;\theta)$ is on the unstable manifold of $0\in\rr^4$. Thus $|u(x;\theta)|\to 0$ as $x\to -\infty$. 

Next, consider the behavior of $u$ as $x\to \infty$. Note that $U(x;\theta)$ is on a strong stable fiber of a periodic orbit
$$U^*(x;\theta) = \Psi\left(\nu\circ\tau((a_*,b_*)(y;\kappa);\mu\right) \quad \quad \text{where} \quad \quad \kappa = \frac{\sqrt \mu}{8}\cos{(2\phi)} + \mathcal{O}(\ep^2).$$
After applying the various transformations, we see that the first component is $U^*_1(x;\theta) = u_*(kx-\theta;k)$, a periodic function with wavenumber 
\begin{equation}\label{e:kk}
k(\phi;\mu) = 1 + \frac{\mu}{16}\cos{(2\phi)} + \mathcal{O}(\mu^{2}). 
 \end{equation}
As $x\to \infty$ we see that $|u(x;\theta) - u_*(k x-\theta;k)| = |U_1(x;\theta) - U^*_1(x;\theta)| \to 0$. 

All that's left is to establish the relation between the phase $\phi$ at the intersection and the phase $\theta$ of the asymptotic pattern. By comparing the phase of the defect solutions $a_d(y;\kappa)$ at the intersection $\widehat{\mathcal W_{0,\ep}^s}\cap \mathcal{W}_{\mathcal{P}(\ep)}^s$ and in the limit $y\to \infty$, we can establish the relation
$$ \theta(\phi;\ep) = \phi -\frac{\ep}{4\sqrt{2}}\cos(2\phi) + \mathcal{O}(\ep^2). $$
Substituting this into \eqref{e:kk}, we obtain the same leading order expansion now with $\phi$ replaced by $\theta$
\begin{align*}
k(\theta;\mu) & = 1 + \frac{\mu}{16}\cos\left(2\left(\theta + \mathcal O(\sqrt{\mu})\right)\right) + \mathcal{O}(\mu^{2})\\%red
& = 1 + \frac{\mu}{16}\cos{(2\theta)} + \mathcal{O}(\mu^{3/2}).%red
\end{align*}
\end{proof}

%In summary, the set $\{ e^{i \phi}(1/2, 3/4\sqrt 2)\}$, which corresponds to $r=1/2$, is on the unstable manifold of $(0,0)$ in the phase space of the equations for $x<0$. When transformed into the phase space of the equations for $x>0$, this circle is on the stable manifold of a periodic orbit $(a_*,b_*)(k_*;\ep)$ with $k_* = \frac{1}{8}\cos{(2\phi)}\ep +\mathcal O(\ep^2)$. Mapping $(a_*,b_*)(k_*;\ep)$ back through the transformations to $\rr$ we arrive at a periodic solution $u_*(\kappa x;\kappa) = \frac{2}{\sqrt{3}} \cos{( (1+\ep k_*/2)x)} \sqrt{1-k_*^2} \ep + \mathcal{O}(\ep^2).$ So with $\ep = \sqrt \mu$ and %with $\kappa = \frac{1}{8}\cos{(2\phi)}\ep =\frac{1}{8}\cos{(2\phi)}\sqrt\mu$ and 
%after accounting for the spatial rescaling of $\nu$, we know that our solution $u(x)$ will asymptotically approach
%$$u_*(\kappa x-\theta;\kappa) = \frac{2}{\sqrt{3}} \cos{( (1+\mu\cos{(2\phi)}/16)x-\theta)} %\sqrt{1-\tfrac{1}{8}\mu}
% \sqrt\mu + \mathcal{O}(\mu), \quad \quad \text{for some phase }\theta\in [0,2\pi)$$
%and so $\kappa = 1 + \frac{\mu\cos{2\phi}}{16}$ at leading order.

\begin{figure}[h]
    \centering
    \includegraphics[width = 0.4 \textwidth]{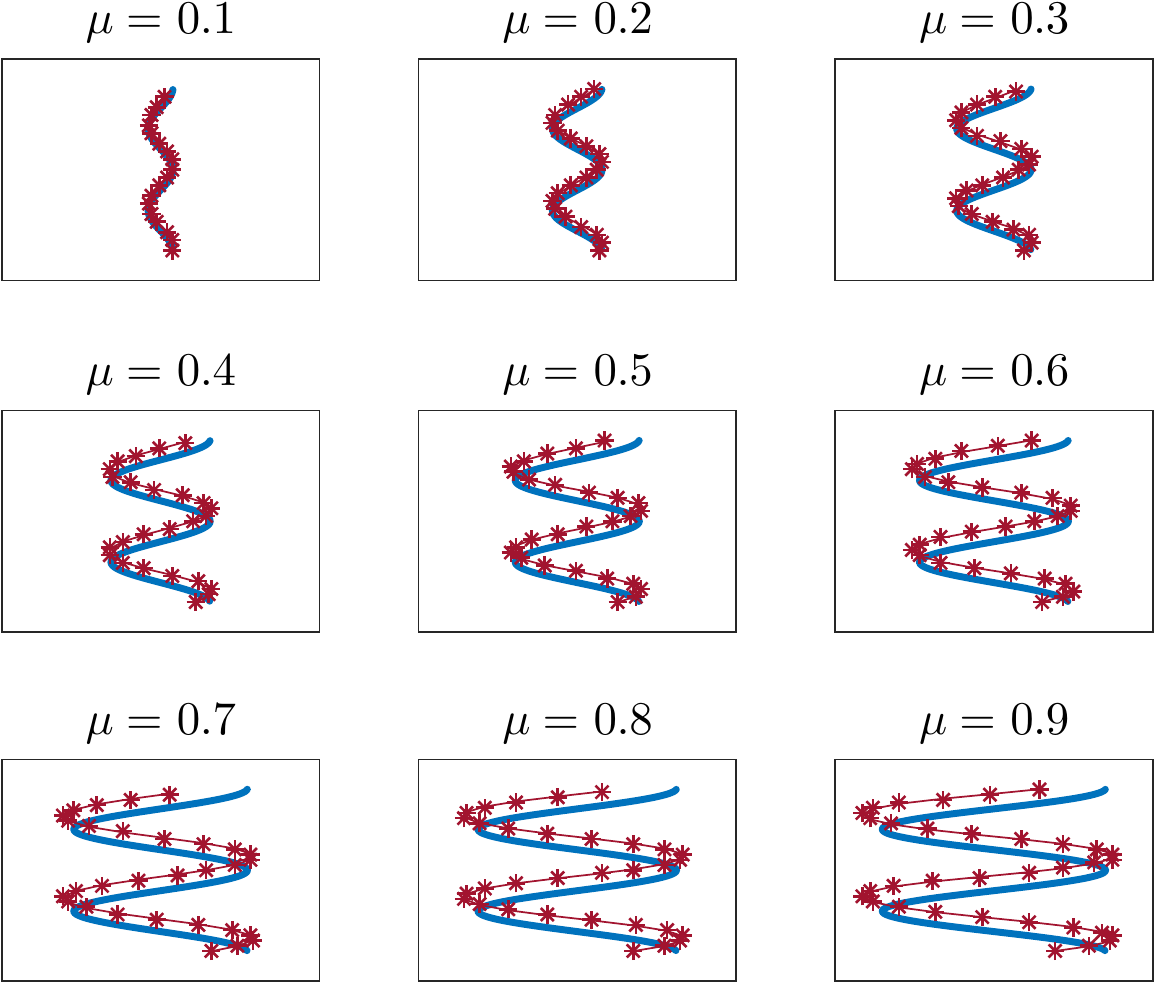}
    \qquad
    \includegraphics[width = 0.4 \textwidth]{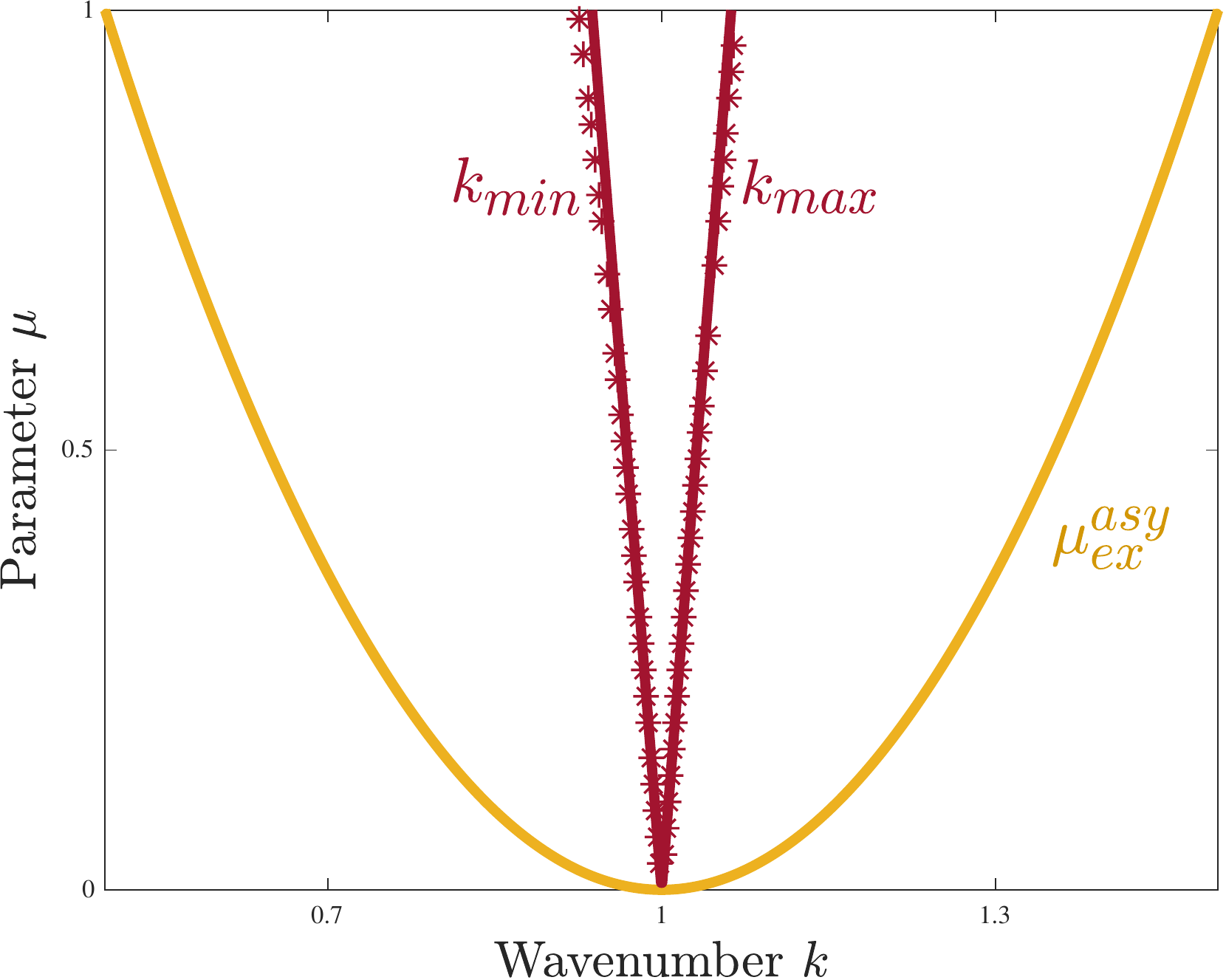}
    \caption{%Maximal/minimal wavenumbers computed with numerical continuation compared to asymptotic predictions of theory. 
    Left: Strain-Displacement relations computed with numerical continuation (red $\ast$) and predicted by theory (blue). \change{Each curve is plotted with $k\in [0.92,1.08]$, horizontal, and $\phi\in [-4.25,4.25]$, vertical.} Right: Maximal/minimal wavenumbers computed (red $\ast$) and predicted with/out (gold/red) jump-type parameter.}\label{fig:AsymComp}
%    \begin{subfigure}[t]{0.45\textwidth}
%        \centering
%        \includegraphics[width = \textwidth]{muCompTest.pdf}
%        \caption{Strain-Displacement relations computed ($\ast$) and predicted (blue).}\label{fig:AsymComp}
%    \end{subfigure}
%    ~
%            \begin{subfigure}[t]{0.45\textwidth}
%        \centering
%        \includegraphics[width = \textwidth]{firstcomp.pdf}
%        \caption{Maximal/minimal wavenumbers computed ($\ast$) and predicted with/out (gold/red) step-type parameter.}\label{fig:AsymComp}
%    \end{subfigure}
%    \caption{Maximal/minimal wavenumbers computed with numerical continuation compared to asymptotic predictions of theory.}
%    
\end{figure}

\section{Numerics and Extension} \label{sect:application}

\subsection{Numerical Corroboration} 
We found excellent agreement between our theory and values of $k,\theta$ computed using numerical continuation. Numerical computation of strain-displacement relations is equivalent to computing a family of heteroclinic orbits connecting an equilibrium and a family of periodic orbits. Our particular approach, following \cite[\S5]{scheelMorrissey}, uses numerical farfield-core decomposition. More specifically, we use an ansatz $u(x)=\chi(x) u_*(kx-\theta;k)+w(x)$ with $\chi$ a smoothed out version of the characteristic function of $[1,\infty)$, say. One solves for the correction $w$ and the parameters $\theta,k$ after adding artificial homogeneous Dirichlet boundary conditions at $\pm L$ and imposing a phase condition near $x=L$ to enforce exponential localization of $w$. 
For details, see \cite[\S5]{scheelMorrissey}. The results indicate strong agreement with Theorem \ref{thm:mainthm}, even as $\mu$ approaches 1. See Figure \ref{fig:AsymComp}.

In the remainder of this section we consider implications in two space-dimensions, considering \eqref{eqn:SHjump} with $(x,y)\in\rr^2$. First, we review the known two-dimensional instabilities of the equation with \typo{a} homogeneous parameter. We then discuss how slowly growing the size of the domain can select a \emph{unique} wavenumber from the band of allowed wavenumbers \cite{gohBeekie}. Finally, we present results of direct numerical simulations and summarize our methods.

\subsection{Instabilities}

Considering our one-dimensional stripe solutions  $u_*(kx;k)$ as solutions of the two-dimensional Swift-Hohenberg equation, posed on $(x,y)\in\rr^2$ \change{with a homogeneous parameter}, we may study the spectrum of the linearization at one of these states. It turns out that these stripes are linearly (and nonlinearly) stable only in a small subregion, bounded by curves that are commonly referred to as Eckhaus and zigzag boundaries, which possess asymptotic expansions near $\mu=0$,
\[
 \mu_E=3(1-k^2)^2+\mathcal O ((1-k^2)^3),\qquad (1-k^2)=-\mu_Z^2/512+\mathcal O (\mu_Z^3),
\]
respectively. The Eckhaus boundary is present in one-dimensional systems, whereas the zigzag boundary invokes perturbations depending on $y$. See \cite{mielke} and references therein for a detailed account of stability. \change{We illustrate these stability boundaries in Figure \ref{fig:stepParam}.}% (Note that although the zigzag boundary appears vertical, it in fact approaches $k=1$ from the left like a square root. In particular, it is to the left of the line $k = 1$.)}

\begin{figure}[h]
    \centering
%    \includegraphics[width = 0.4 \linewidth]{}
%    \qquad
    \includegraphics[width = 0.9 \linewidth]{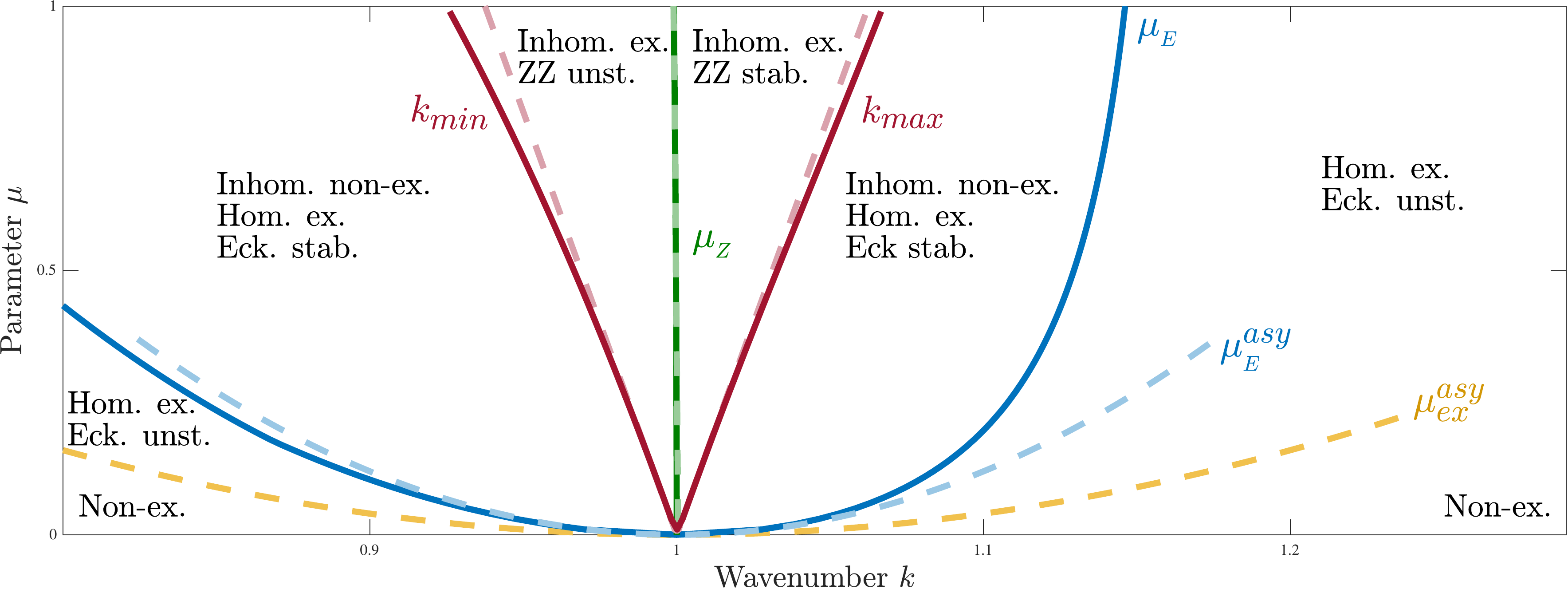}
    \caption{\change{Regions of in/stability and existence for stripes in a system with homogeneous parameter and for half-stripes with a jump-type inhomogeneity; leading order asymptotics (dashed) and directly computed data (solid). See text for details.}}\label{fig:stepParam}
%        \begin{subfigure}[t]{0.45\textwidth}
%        \centering
%	\captionsetup{width=.75\linewidth}
%	\includegraphics[width = \linewidth]{Instabs.pdf}
%	\caption{Eckhaus and zigzag instability boundaries (blue and green) with existence curve (gold) for comparison.}
%	       	\label{fig:stepParam} 
%    \end{subfigure}
%        ~ 
%    \begin{subfigure}[t]{0.45\textwidth}
%        \centering
%	\captionsetup{width=.75\linewidth}
%	\includegraphics[width = \linewidth]{selectZZ.pdf}
%	\caption{Instability boundaries (blue and green) with existence curve in the presence of step-type parameter (red).}
%	       	\label{fig:stepParam} 
%    \end{subfigure}
%    \caption{Regions of existence and stability of stripes (a) and half stripes with step-type parameter (b); first order asymptotics (curves) and direct computed data ($\ast$).}
%    \label{fig:stepParam}
\end{figure}
\change{
After overlaying the regions of stability with the existence information (for both full stripes and half-stripes) contained in Figure \ref{fig:AsymComp}, we have eight distinct regions. These appear in Figure \ref{fig:stepParam} and are described in detail below, clockwise from bottom left:}
\begin{itemize}
\item no stripes exist\\ (Non-ex.)
\item stripes exist with homogeneous parameter and are Eckhaus unstable, no half-stripes exist\\ (Hom. ex., Eck. unst.)
\item no half-stripes with inhomogeneity, full stripes exist and are Eckhaus stable but zigzag unstable\\ (Inhom. non-ex., Hom. ex., Eck. stab.)
\item half-stripes exist with inhomogeneity and are zigzag unstable, full stripes exist with previous stability\\ (Inhom. ex., ZZ unst.)
\item half-stripes exist and are stable, full stripes exist and are stable\\ (Inhom. ex., ZZ stab.)
\item no half-stripes exist, full stripes exist and are stable\\ (Inhom. non-ex., Hom. ex., Eck. stab.)
\item stripes exist and are Eckhaus unstable, no half-stripes exist\\ (Hom. ex., Eck. unst.)
\item no stripes exist\\ (Non-ex.)
\end{itemize}
\change{Note that half-stripes selected by the jump-type inhomogeneity have wavenumbers that are bounded away from the Eckhaus instability, but overlap significantly with the zigzag-unstable region. }

\subsection{Growing Domains and Zigzag Selection}
\change{The fact that the minimum selected wavenumber $k_{min}$ is zigzag unstable has dramatic consequences for pattern selection. However, the effects are not immediately visible in simulations because the system does not select a \emph{unique} wavenumber. In particular, we would need to impose a phase condition $\theta \sim \pi/2$ to select $k_{min}$. This would represent an additional, external mechanism for selection. Instead, we pose our problem in the context of slowly growing domains and apply theory from \cite{gohBeekie}. In the second part of this section, we illustrate the results with direct numerical simulations that indicate that the system self-selects a zigzag pattern.}

Systems with slowly growing domains have been observed to exhibit pattern selection, e.g. in developing organisms in biology \cite{kochMeinhardt}. In \cite{gohBeekie}, several model equations similar to ours are examined with a moving parameter jump $m(x-ct)$ or on a linearly growing domain with boundary conditions. The main result is that the selected wavenumber $k$ depends on the speed $c$ according to
$$k(c) = k_{min} +\mathcal{O}(\sqrt c)$$
where $k_{min}$ is the minimum wavenumber in the admissible band of wavenumbers occurring in the same system with a fixed  parameter jump. In our problem \eqref{eqn:SHjump}, we expect $k_{min}$ to be zigzag unstable, see Figure \ref{fig:stepParam}. Thus, if we choose $c$ near $0$, we may expect the system to select striped pattern solutions which are unstable to transverse perturbations.

\subsubsection*{Simulations}

We observed the selection and development of zigzag patterns in a direct numerical simulation of the problem \eqref{eqn:SHjump} in two dimensions. We use a standard spectral method with $2^{11} \times 2^{11}$ Fourier modes, creating an effective $dx\approx 0.01$, and implicit Euler time stepping with $dt = 0.025$.
\begin{figure}[h]
\centering
$\begin{array}{c c c}
\includegraphics[width = .3\textwidth]{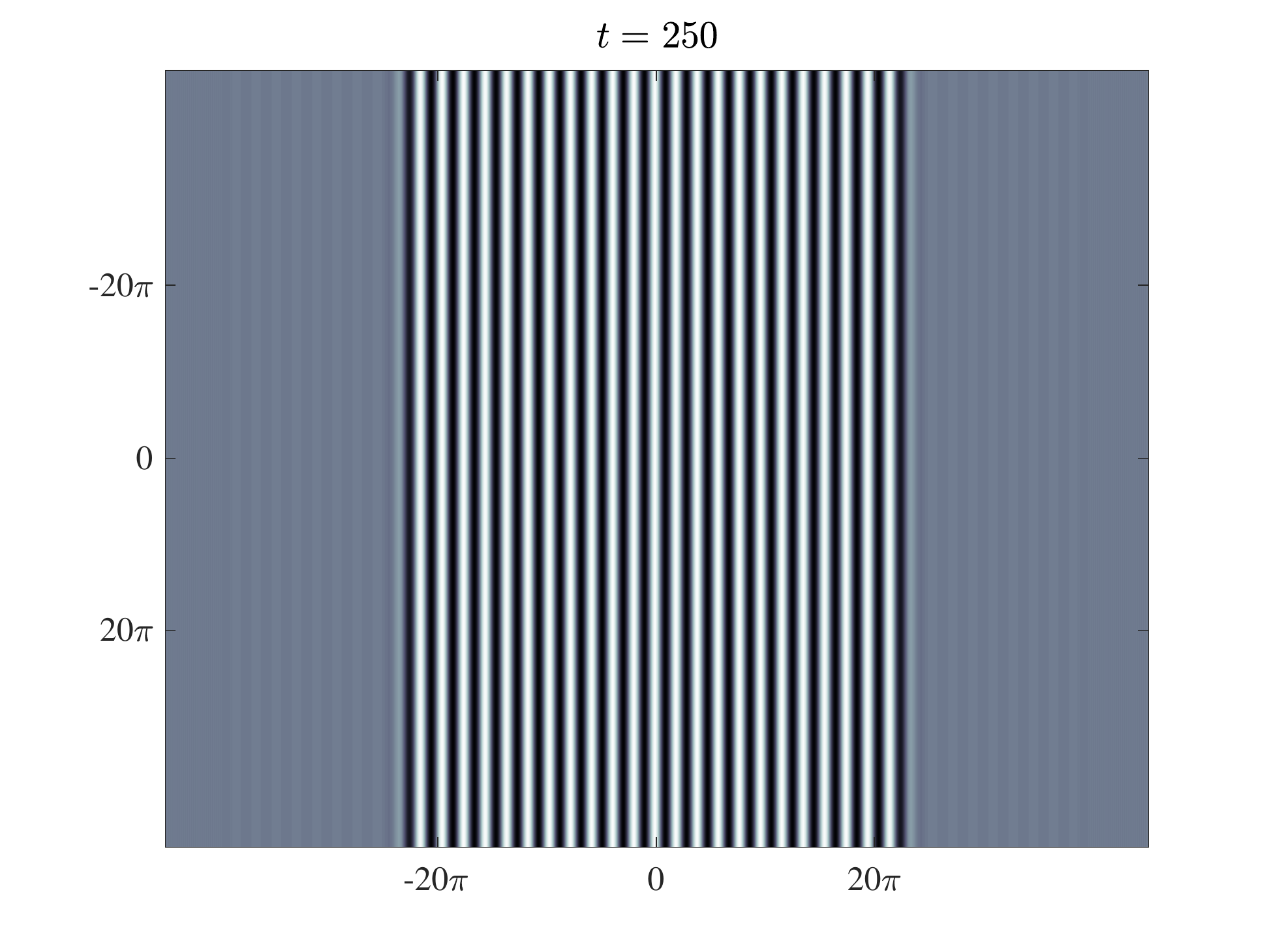} & \includegraphics[width = .3\textwidth]{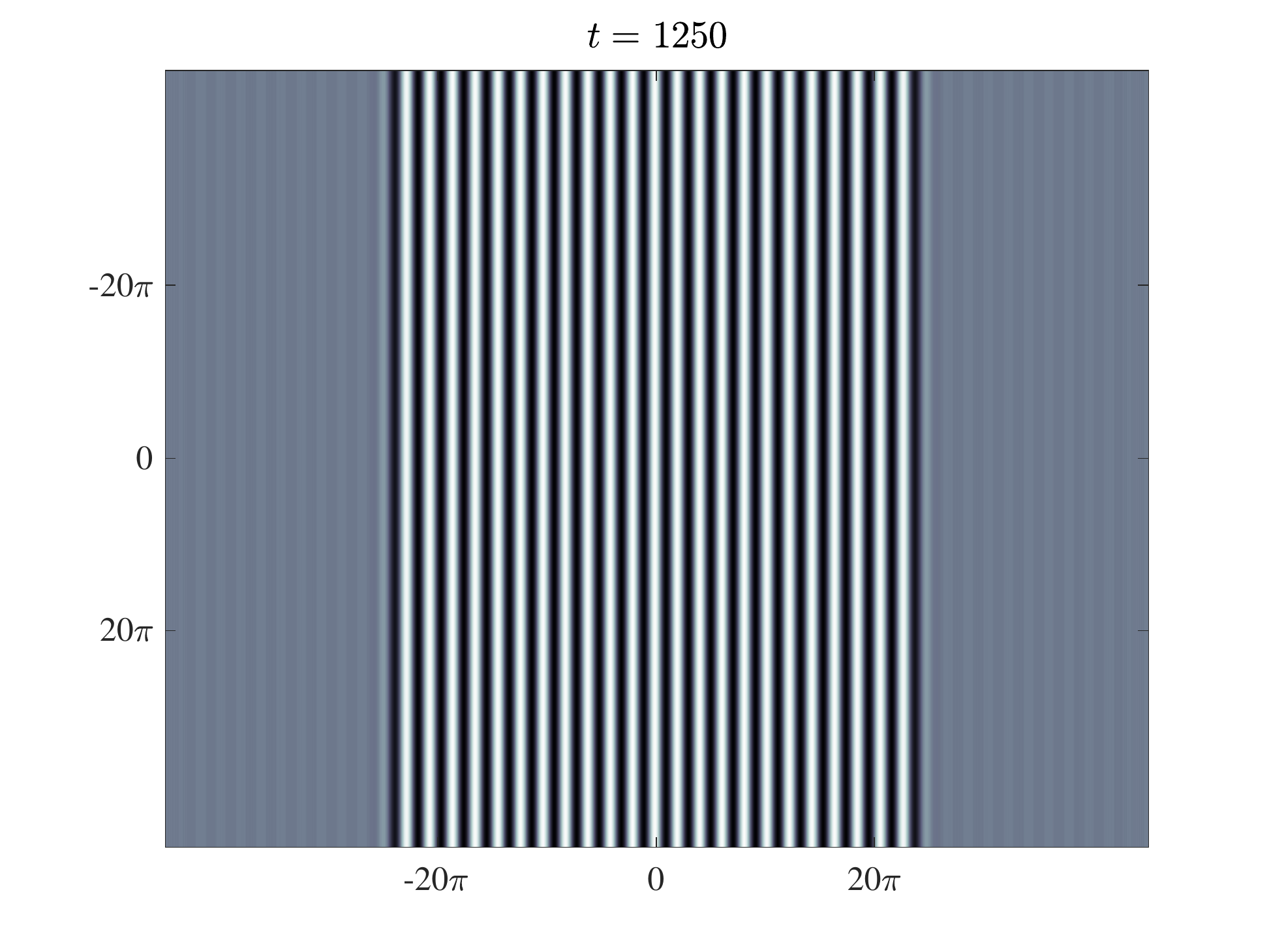} & \includegraphics[width = .3\textwidth]{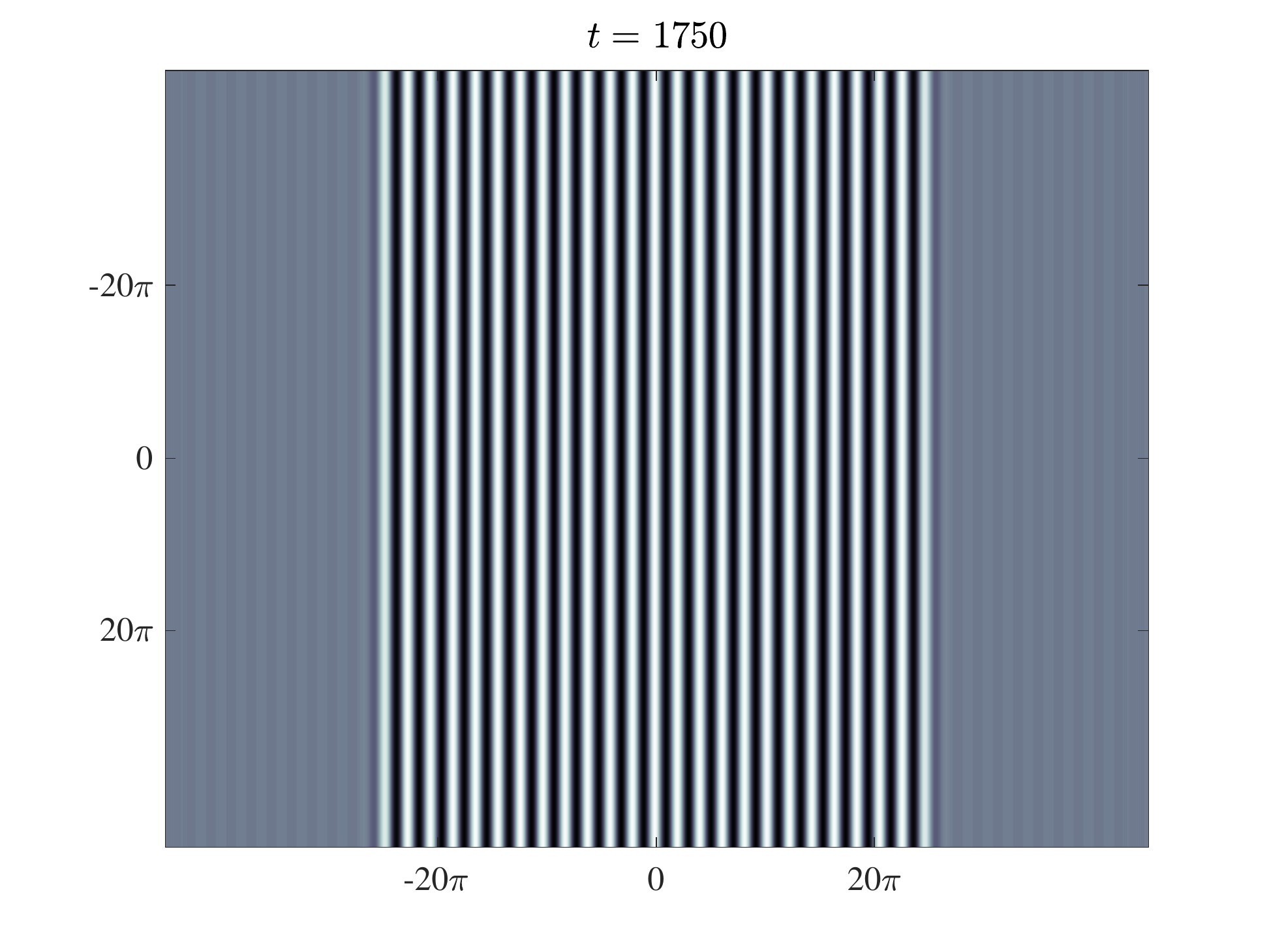}\\ 
\includegraphics[width = .3\textwidth]{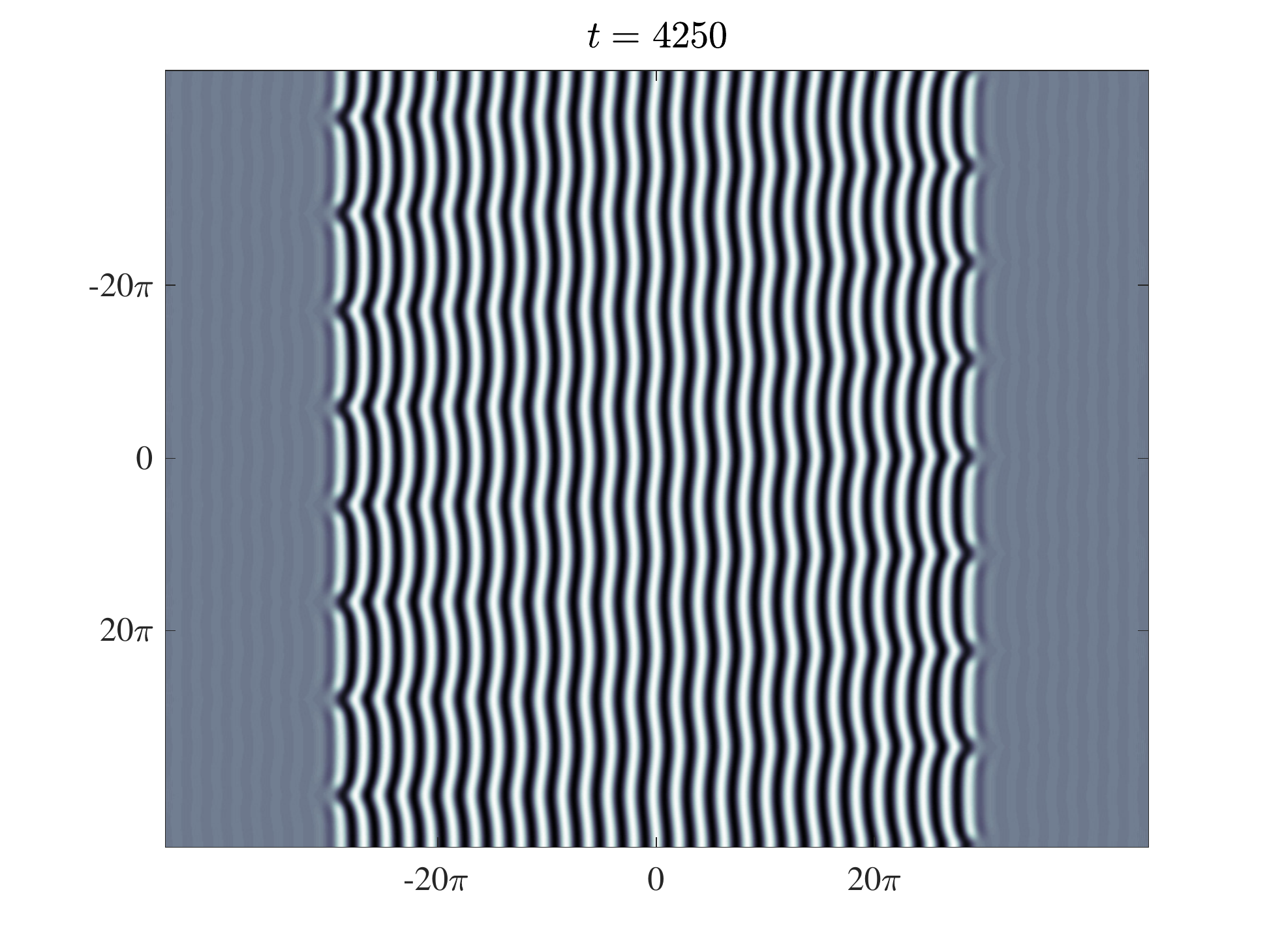} & \includegraphics[width = .3\textwidth]{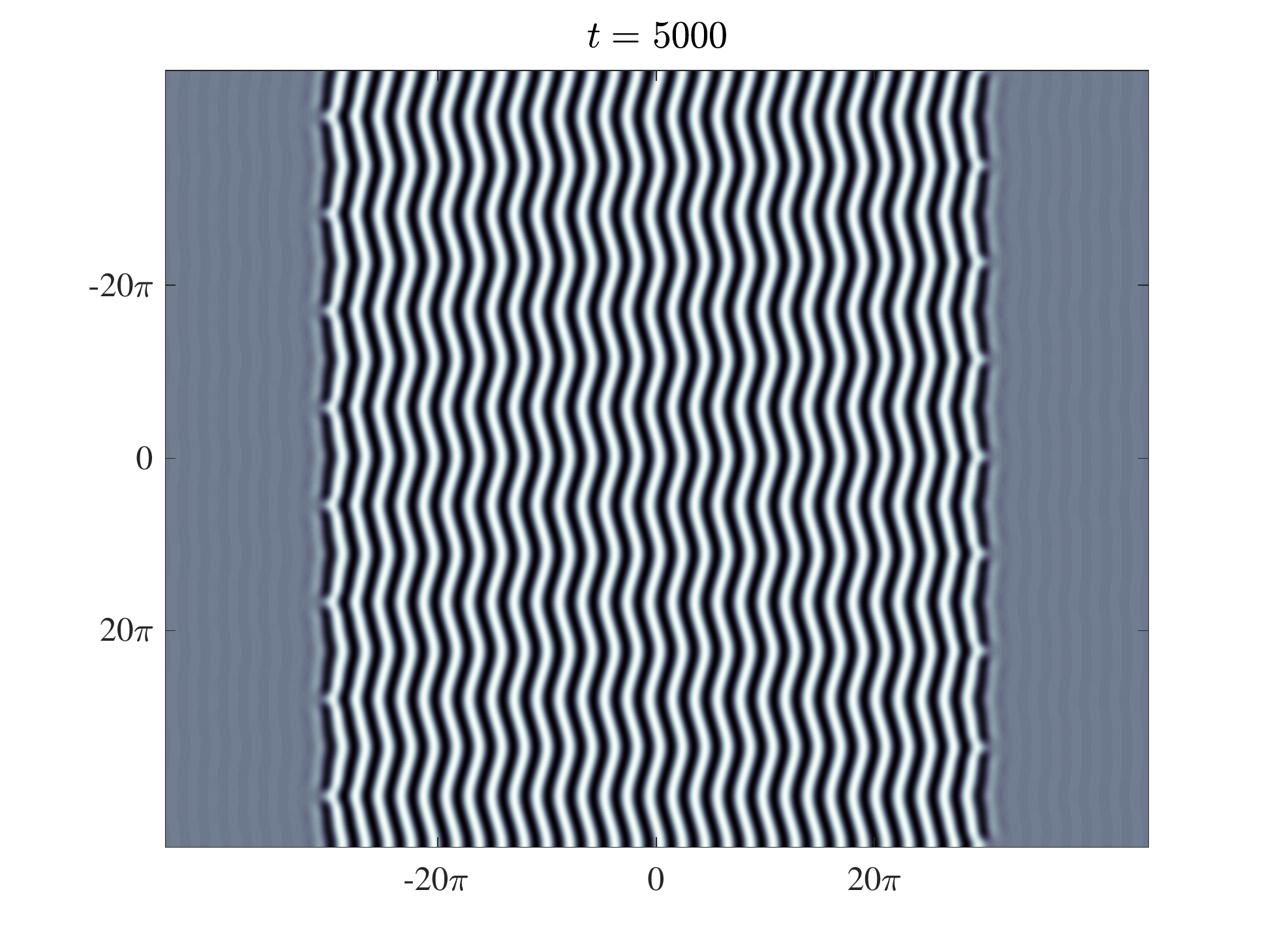} & \includegraphics[width = .3\textwidth]{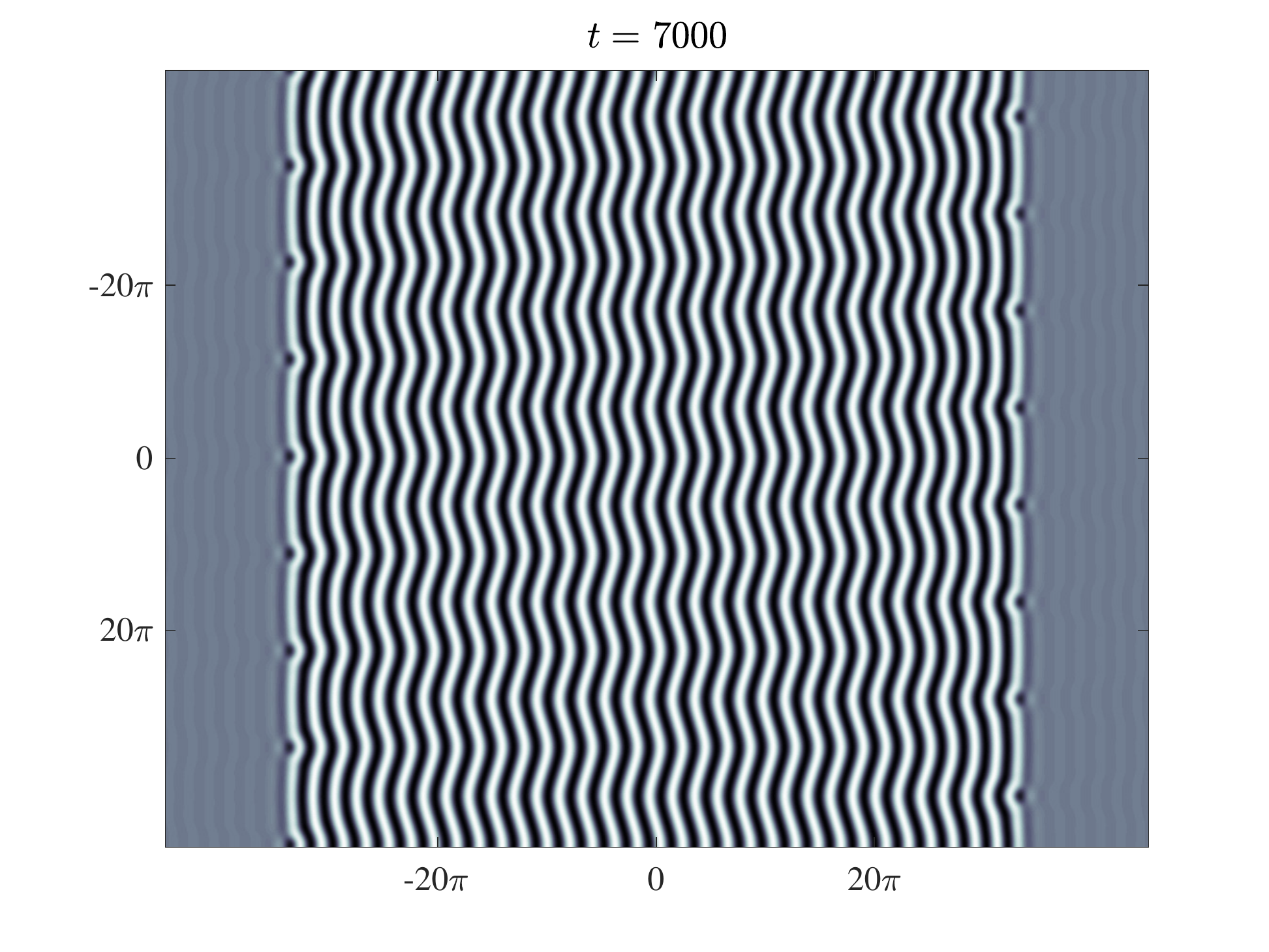}\\ 
\end{array}$
\caption{Snapshots from direct numerical simulation of the Swift-Hohenberg equation posed on a square domain with periodic boundary conditions; see text for detailed description. The simulation uses a spectral method with implicit Euler time-stepping and initial data that is periodic in the middle half of the domain and $0$ elsewhere.}
\label{fig:frames}
\end{figure} 
All simulations are posed on a square domain with side length $2 L = 90\pi$ and periodic boundary conditions. \change{We use a ``plateau"-type parameter consisting of two jumps for compatibility with our periodic boundary conditions, which are necessary due to our use of a spectral method. (In this case, the effect from the additional jump in negligible, see Section \ref{sect:discussion}.) The two jumps move away from each other to emulate a growing domain. The size of both jumps is $\mu = 0.8$ and initial data is an even periodic pattern with initial wavenumber $k_0$ in the middle half of the domain}
$$u(x,y,0) = \begin{cases} \sqrt\mu \cos (k_0 x), & |x|<L/2\\ 0, & |x|>L/2\end{cases}.$$
%We first tested the wavenumber selection. Using a small speed $c =0.005$, Figure {\color{red} BLAH BLAH} shows the observable wavenumber at time $t=1500$ as a function of the initial wavenumber $k_0$. The constant observable wavenumber indicates that the system is indeed selecting a unique wavenumber, independent of initial condition.
%Starting from initial wavenumber $k_0 = 1.05$, Figure {\color{red} BLAH BLAH} shows a the observable wavenumber at time $t=1500$ as a function of the speed $c$. The square-root relationship agrees with the prediction of \cite{gohBeekie} and indicates that the system has achieved the selected wavenumber by time $t=1500$.
%Next, we extend the simulation to longer times $t=6000$. Using the tests above, 
We fix $c = 0.005$ and $k_0 = 1.05$. At $t_p=1500$ we add a small, transverse perturbation 
$$ u_p(x,y,t_p) = \begin{cases} 0.1 \cos \big(k_p x + 0.9 \cos(8\pi y/L)\big), & |x|<L/2+c t_p\\ 0, & |x|>L/2+c t_p\end{cases}$$
where $k_p = 0.97125$ is the observable wavenumber achieved by the system at $t_p$ as numerically computed in independent trials.%in the last paragraph. 

Figure \ref{fig:frames} shows snapshots for various $t$ values. When $t<1500$, all stripes widen as the wavenumber decreases and the parameter plateau widens; no new stripes appear, should the reader choose to count. At $t=1750$, a stripe has been added on the outside, indicating that the wavenumber has stopped decreasing. Also, notice that the transverse perturbation, added at $t=1500$, is small enough that it is unobservable. At $t= 4250$ we can see the system moving away from the $y$-invariant stripes. By $t = 5000$, the system seems to have relaxed to a stationary zigzag pattern. Indeed, for long times past $t=7000$ the zigzag pattern remains stationary and continues to add new zigs and zags on the outside. % as the plateau moves.
We comment further on patterns in bounded regions in the next section.

\section{Discussion} \label{sect:discussion}

\subsubsection*{Patterns in Bounded Regions}

Experiments are usually performed in bounded domains, and one may therefore be interested in how the wavenumber selection mechanism described here interacts with left and right boundary conditions. \change{In the center of the pattern-forming region, one expects that a strain-displacement relation is induced by the boundary at one side while the second boundary forces a different strain-displacement relation.}
Matching these two relations, and including correction terms that are exponentially small in the size of the patterned region, has been carried out in \cite[\S5]{scheelMorrissey}. The result is a geometric subtraction (subtracting phases) and quantization (intersecting with $\phi=2 k L$) of the two strain-displacement relations. The argument there carries through in a straightforward fashion to the equivalent case of a double-jump, or parameter plateau, 
\[
 m(x)=\left\{\begin{array}{ll}\mu,& |x|\leq L,\\ -\mu,& |x|\leq -L,\end{array}\right.
\]
with $L$ large, confining the patterned region to a strip $|x|\leq L$. The set of equilibria can then be described ``explicitly'' in the thermodynamic limit of $L\to\infty$. Numerical results in a similar situation were presented in \cite[Fig. 3]{gohBeekie}.

\subsubsection*{Boundary Conditions}

There is an undeniable analogy to be made between our problem \eqref{eqn:SHjump} and the SH equation posed on a semi-infinite domain with Dirichlet boundary conditions
\begin{align} \label{eqn:SHBC}
\ut = -\left(1 + \frac{\partial^2}{\partial x^2}\right)^2 u +\mu u - u^3, \quad \quad u\in \rr, \quad\quad x\in[0,\infty) \quad \quad u(0,t) = \dstypo{u_{x}}(0,t) = 0.
\end{align} 
Problem \eqref{eqn:SHBC} was formally and numerically explored during the early 1980s. % in a flurry of renewed interest in the theory surrounding experiments in thermal fluid convection. 
In particular, the linear coefficient $\frac{1}{16}$ from our Theorem \ref{thm:mainthm} also appears in \cite{CDHS, ferry, pomeauZaleskiBC}. The authors' methods use conserved quantities of amplitude equations introduced by \cite{newellWhitehead} which are equivalent to our real Ginzburg-Landau equations \eqref{eqn:1}-\eqref{eqn:2} with $y>0$. We have found no results from this period providing an explicit strain-displacement relation between the phase and wavenumber, although the idea is mentioned in \cite{pomeauZaleskiBC}. 

On the other hand, one can envision a homotopy from parameter jumps to boundary conditions, treating parameter jumps $m(x)=\mu$ for $x>0$, as before, but $m(x)=-\mu_-<0$ for $x<0$. 
Slightly generalizing our result to this scenario with $\mu_-=- C\mu$ for some $C>0$, one finds the same leading-order expansion with band width boundaries $1\pm \mu/16$ . Letting $C\to \infty$, or even using different scalings in $\mu$, one can arrive at Dirichlet boundary conditions in Ginzburg-Landau, or clamped boundary conditions in the SH equation, $u_x=u=0$ at $x=0.$ It would be interesting to study strain-displacement relations in such a broader class of parameter jumps, testing the universality of the $\mu/16$-correction.

More recently, a similar boundary value problem was studied numerically through the strain-displacement framework in \cite{scheelMorrissey}.
\typo{The main difference from our earlier analysis is that there is no unstable manifold of $u\equiv 0$ to consider. Instead, one simply intersect a boundary manifold, for instance $\mathcal B = \{ u \mid u+u_{xx}=u_x+u_{xxx} =0 \text{ at } x=0\}$, with the stable manifold of the periodic solutions $u_*$.} %However, since the half-stripe solutions to \eqref{eqn:SHjump} approach $u\equiv 0$ exponentially quickly (for $x<0$) the effect of omitting the unstable manifold does is not large enough to be seen in our $\mathcal O(\mu)$ calculation. {\color{red} This was the first thing I thought of, but seems to imply that a portion of my work was unneccessary??? (computing $\mathcal{W}_0^u$)}

 %This phase information is important for determining stability of a given solution in the full PDE. 

\subsubsection*{Slow Parameter Ramps}

Even with the above attention to the problem with boundary conditions, a rigorous discussion of the problem with a jump-type parameter is absent from the literature.
Instead, some authors have considered a parameter that varies slowly in space. The first such study appears in \cite{CrossKramer} which shows that at any finite order, a sufficiently slow spatial parameter ramp selects a unique wavenumber. Given a maximal value of the parameter ramp, the authors of \cite{pomeauZaleskiRamps} compute the selected wavenumber to leading order. This contrasts with our case, in which a narrow band of wavenumbers is selected. One consequence of this qualitative difference is that the selected wavenumber is zigzag stable, and thus a slow parameter ramp cannot be used to select zigzag patterns in the way that we use a parameter jump to do so in Section \ref{sect:application}.

It would be interesting to understand in more detail the transition from slow parameter ramps to boundary conditions or the parameter jump we consider here. One approach would be to interpolate between a slow-ramp and a jump-type parameter. One could consider heterogeneous parameter profiles $h(x;\gamma)$ which converge (in some sense) to a jump-type parameter and an arbitrarily slow ramp for extremal values of $\gamma$. A simple such family is given by $h(x,\gamma) = \mu \tanh(\gamma x)$.
% $$h(x,\gamma) = 
% \begin{cases} 
% -\mu, & x<-1/\gamma\\
% \mu\gamma x, & -1/\gamma < x < 1/\gamma\\
% \mu, & x> 1/\gamma
% \end{cases}.$$
We expect that for $\gamma$ sufficiently large, our analysis here can be adapted without much additional work, leaving leading-order coefficients unchanged. Roughly speaking, one appends an equation for the parameter evolution, $h'=(\gamma/\mu)(\mu-h)(\mu+h)$. Then the flows we found for $x>0$ and $x<0$ now reside in normally hyperbolic asymptotic subspaces where $h=\pm\mu$. The fast flow in the direction of $h$ is trivial in the direction of $u$. Thus contributions to the strain-displacement relation come from normal form transformations, only, as in the present work. We expect changes in the asymptotics when normal hyperbolicity breaks down, for $\gamma\sim \sqrt{\mu}$, such that dynamics in $h$ cannot be thought of as instantaneous anymore and normal-form coordinate changes evolve in $h$ nontrivially. In the limit when $\gamma$ is very small, normal form changes can be performed adiabatically in $h$ and we recover the results on slow ramps. 
%\subsection{Different Jumps: magnitude, smooth and monotone}

\subsubsection*{Stability}

Linearizing at the half-stripe solutions constructed here, one expects to find continuous spectrum up to the origin. As argued in \cite{scheelMorrissey}, an eigenvalue emerges from the edge of the essential spectrum when following solutions along the strain-displacement relation through an extremum of $k$. As a consequence, the slope, $k'(\theta)$ gives a parity index for stability. Generally, one might expect that decreasing $k$, that is, stretching the asymptotic pattern, would be associated with ``pulling'' on the pattern, that is, displacing the pattern to the right or increasing $\theta$. With this intuition, $k'(\theta)<0$ would correspond to stable patterns and $k'(\theta)>0$ to ``unphysical'' unstable patterns; see the discussion in \cite[\S 2.4]{scheelMorrissey} and the numerical evidence in \cite{gohBeekie}. The present situation might be a good starting point to understand how this mechanistic intuition may relate to a spectral analysis of the linearization. 

\subsubsection*{Patterns in the Plane: Two Dimensions}

In two space dimensions $(x,y)\in\rr^2$, with a one-dimensional parameter jump $m(x)$,  stripes near $x=\infty$ can possess arbitrary orientations $u_*(k_x x +k_y y;k)$, $k^2=k_x^2+k_y^2$. In particular, one can now ask for solutions asymptotic to stripes that are perpendicular to the parameter jump, with $k_x=0$, or at an oblique angle to the parameter jump. In the much simpler Allen-Cahn equation, and to some extent in the slightly more complicated Cahn-Hilliard equation, such solutions have been constructed in \cite{monteiro1,monteiro2,monteiro3}, showing in particular that stripes are either parallel or perpendicular to the parameter jump in this case. In the case of the Swift-Hohenberg equation, pursuing an infinite-dimensional center-manifold and normal form analysis following the analysis of grain boundaries in \cite{hs1,hs2,sw} seems to be a promising generalization of the approach developed here. 

Yet more intricate phenomena are to be expected when considering hexagonal patterns rather than stripes, for instance when adding quadratic nonlinearities to the Swift-Hohenberg equation. In simple cases, such as when the parameter jump aligns with a symmetry axis of the hexagonal pattern, one could try to adapt the infinite-dimensional analysis for oblique stripes and track quadratic terms in the normal form \cite{doelmanSchneider}, possibly pointing towards distortions in of the perfect hexagonal lattice due to the parameter jump.

% 
% %\nocite{*}
% \linespread{1}
% \printbibliography

\end{document}